\documentclass{article}
\usepackage{amsmath,amsthm,amssymb}
\usepackage{tikz}
\usetikzlibrary{positioning, arrows.meta, shapes.geometric, backgrounds, fit, calc}
\usepackage{array}
\usepackage[margin=1in]{geometry}
\usepackage{booktabs}
\usepackage{siunitx}
\usepackage{xcolor}
\usepackage{verbatim}
\usepackage{float}
\usepackage{multirow}
\usepackage{graphicx}
\usepackage{placeins}
\usepackage{authblk}
\usepackage{caption}
\usepackage{subcaption}
\usepackage[colorlinks=true, allcolors=blue]{hyperref} 
\usepackage{cleveref}   
\definecolor{revisioncolor}{RGB}{0,120,90}
\newcommand{\rev}[1]{\textcolor{revisioncolor}{#1}}

\newtheorem{proposition}{Proposition}[section]

\theoremstyle{remark}
\newtheorem{remark}{Remark}[section]

\title{Physics-constrained inference of somatic dynamics from dendritic recordings with sparse somatic supervision in weakly coupled two-compartment neuron model}

\author[1]{Abdeltif Oujbara\thanks{Corresponding author. Email: \texttt{abdeltif.oujbara@univ-lehavre.fr}}}
\author[1,2]{Benjamin Ambrosio}
\author[1]{M.\,A. Aziz-Alaoui}

\affil[1]{Department of Mathematics; Normandie Univ.; LMAH UR 3821, Universit\'e Le Havre Normandie, 25 rue Philippe Lebon, 76600 Le Havre, Normandie, France}
\affil[2]{The Hudson School of Mathematics, 244 Fifth Avenue, New York, NY 10001, USA}

\begin{document}

\maketitle
\begin{abstract}

Somatic membrane potential is the primary determinant of neuronal output, yet it remains inaccessible in many experimental setups where only dendritic recordings are available. Reconstructing somatic dynamics from distal measurements is a challenging inverse problem, particularly when the soma and dendrites are weakly coupled, as dendritic signals represent a filtered and attenuated version of somatic activity.
To address this, we use a physics-informed neural network (PINN) constrained by a two-compartment Hodgkin--Huxley model. The network is trained on dense dendritic voltage recordings and the known injected current, complemented by a small number of somatic voltage samples (at most 5\% of the time points), and it reconstructs the full somatic trajectory while adjusting a selected set of somatic maximal conductances.
On synthetic data from four stimulation protocols, we quantify how the reconstruction depends on the amount of somatic supervision. Without somatic samples, the present formulation returns a smooth trajectory in which the action potentials are absent and the subthreshold level is biased, with a root-mean-square error of 10--14~mV; 1\% of the somatic time points is enough to recover every spike; and 5\% brings the root-mean-square error to about 2~mV, spikes included, with relative errors below 0.1\% on the sodium and delayed-rectifier conductances. We also compare the PINN with an unscented Kalman filter constrained by the same model and assimilating the same observations, and we assess robustness to measurement noise and to random initialization.
The reconstructions reported here therefore rely on sparse somatic anchoring in addition to the dendritic recordings. The results delimit what can be inferred in this synthetic, weakly supervised two-compartment setting and identify the amount of somatic information required by the present formulation.

\textbf{Keywords:} Physics-Informed Neural Networks;  Inverse Problems;  Hodgkin--Huxley model;  Conductance estimation;  Soma-dendrite coupling;  Two-compartment neuron model; 
\end{abstract}

\section{Introduction}
The fundamental challenge in computational neuroscience lies in bridging the gap between the biophysical properties of single neurons and the system-level computations that underlie cognition and behavior. Computational models provide a principled way to address this problem: they encode diverse aspects of a system into mathematical equations and then predict its response to prescribed inputs. In neuroscience, such models typically fall into two broad classes. Phenomenological models, including connectionist and statistical approaches (e.g., artificial neural networks)\cite{li2007modeling}, aim to reproduce input--output relationships and abstract computations without explicitly modeling the underlying biology. In contrast, biophysical models seek to represent the mechanisms directly, using conductance based descriptions of ion channels, synapses, and cellular morphology.

Biophysical models can be constructed at multiple levels of organization, from intracellular signaling and gene regulatory pathways to single cell dynamics and large scale networks. At the cellular level, conductance-based models are used to capture firing patterns and the effects of pharmacological manipulation on channel conductances. At the network level, interacting populations of neurons are assembled into circuits that support specific functions for example, the fear circuit involving amygdala, prefrontal cortex, and hippocampus, which is central to emotional regulation, fear acquisition and extinction, and disorders such as PTSD\cite{ledoux2013emotion,rho2023emotional,tovote2015neuronal}. The most powerful strategy is to couple these cellular and network scales, using biologically grounded single-neuron models as building blocks for physiologically realistic circuit simulations.

Within this framework, the Hodgkin--Huxley (HH) formalism \cite{hodgkin1952quantitative} provides a cornerstone for modeling action potential generation via voltage-gated ion channels. However, a critical bottleneck is the determination of model parameters, such as maximal conductances and gating kinetics. Conventional approaches based on voltage clamp and patch clamp are technically demanding, low throughput, and often impractical for systematically characterizing all compartments of morphologically complex neurons.

This challenge is magnified when the objective is to simulate and understand entire neural networks. To study core computational properties such as neuronal synchronization\cite{Maa2024,math12091382}, rhythmic oscillations\cite{aa9ac9aed729454bbdf9588867ace9d1,faber2001laPrincipalNeurons}, and synaptic plasticity (long-term potentiation and depression, LTP/LTD) within circuits such as the amygdala, prefrontal and hippocampal network processes fundamental to emotional regulation, fear extinction, and memory, we must first have accurate and constrained models of the constituent neurons. Realistic multicompartment models are often computationally prohibitive for this network-scale task, leading to the widespread use of reduced models, such as two-compartment Pinsky-Rinzel type models \cite{traub1991model}, which preserve essential dynamics while maintaining computational practicality.

A central obstacle in developing and constraining these models is partial observability. Experimental recordings, especially in vivo and in vitro \cite{aa9ac9aed729454bbdf9588867ace9d1,faber2001laPrincipalNeurons}, are typically limited to one or a few compartments, leaving the states and parameters of hidden compartments unmeasured. This renders the associated inverse problem of parameter estimation ill posed, particularly when the goal is to build a reliable foundation for large scale network studies.

To address this issue, we leverage the emerging framework of physics-informed neural networks (PINNs) \cite{raissi2019}, which seamlessly integrate sparse data with the underlying biophysical laws. PINNs regularize the estimation problem by penalizing the residuals of the governing
differential equations, which can improve state and parameter estimation when the available
data are incomplete.

Recent computational advances have successfully applied this framework to identify parameters in dynamical systems relevant to neuroscience. For instance, \cite{karniadakis2021physics} reviewed the broad potential of physics-informed learning for solving forward and inverse problems in complex biological systems. More specifically, \cite{Rudi2022FHNNNEstimation} demonstrated the efficacy of neural networks in estimating parameters for the FitzHugh-Nagumo model, a simplified reduction of neuronal dynamics. Moving towards biophysical realism, \cite{yao2023bioe} introduced BioE-PINN, a framework tailored for bioelectrical signals, showing robust parameter identification in single-compartment Hodgkin-Huxley models even with limited data. Ferrante et al. \cite{Ferrante2022PINNHH} further validated this approach on real data from the squid giant axon. However, these studies predominantly focus on single point neuron models. The extension of this framework to multicompartment morphologies where spatial coupling and distinct channel distributions play a critical role remains an open challenge that this work aims to address.
In this work, we employ PINNs to solve a critical inverse problem in a two compartment model of a pyramidal neuron. Our contribution is to show that, in this synthetic two-compartment setting, dense dendritic
voltage traces complemented by a small fraction of somatic samples (at most 5\% of the time
points) allow the method to reconstruct the hidden somatic dynamics, and to quantify how this
reconstruction degrades when the somatic samples are removed. Constraining biophysical models
from partially observed data in this way is one step toward building biologically grounded,
large-scale network models. Such models are ultimately essential for simulating and understanding how emergent properties, such as interregional synchronization and learning rules, arise from cellular-level interactions to govern brain function and dysfunction.
\section{Biophysical model formulation}
\label{sec:biophysical_model}

We employ a reduced two-compartment conductance-based model to represent a pyramidal neuron, specifically tailored to capture the electrophysiological properties of cells in the basolateral amygdala (BLA)\cite{ledoux2013emotion,li2009fearNetwork}. This reduction preserves the essential non-linear interaction between the somatic spike-generation zone and the dendritic integration zone, while maintaining a manageable computational complexity for the inverse problem.
\begin{figure}[h!]
    \centering
     \includegraphics[width=0.6\textwidth]{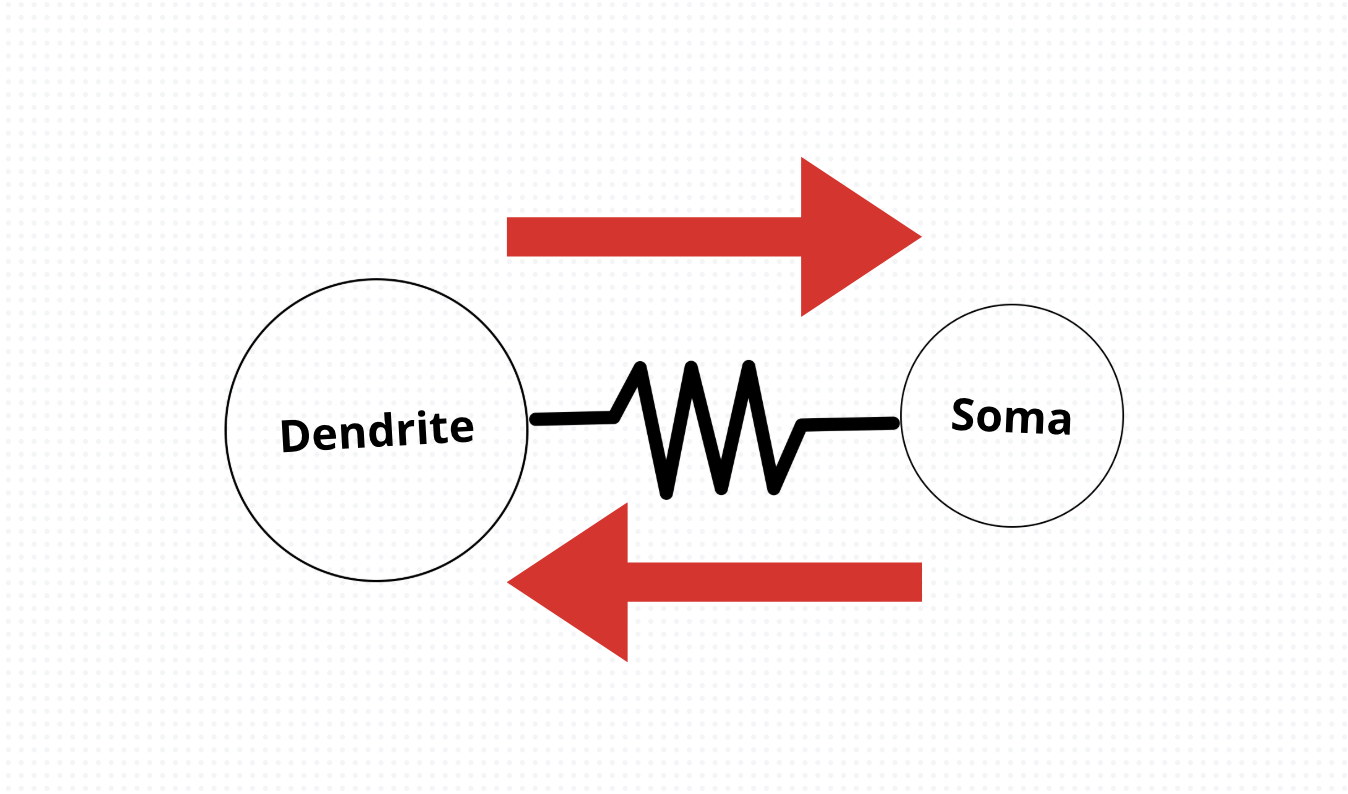} 
     \caption{Two-compartment soma–dendrite neuron model with bidirectional axial coupling $g_c$}
     \label{soma_dendritc_model}
\end{figure}

\subsection{Topology and voltage dynamics}
The neuron is modeled as two electrically coupled compartments  (see Figure~\ref{soma_dendritc_model}) : a soma ($s$) and a dominant apical dendrite ($d$). The membrane potentials $V_s(t)$ and $V_d(t)$ evolve according to the conservation of charge, governed by the following system of coupled ordinary differential equations:

\begin{equation}
\label{eq:voltage_dynamics}
\left\{
\begin{aligned}
    C_{m,s}\frac{dV_s(t)}{dt} &= - \bar{g}_{L,s}\, (V_s(t) - E_L) - I_{\mathrm{ion},s}(V_s(t), \mathbf{x}_s(t)) - \frac{g_c}{p_s}(V_s(t) - V_d(t)) + I_{\mathrm{stim},s}(t), \\
    C_{m,d}\frac{dV_d(t)}{dt} &= - \bar{g}_{L,d}\, (V_d(t) - E_L) - I_{\mathrm{ion},d}(V_d(t), \mathbf{x}_d(t), [\mathrm{Ca}](t)) - \frac{g_c}{p_d}(V_d(t) - V_s(t)) + I_{\mathrm{stim},d}(t).
\end{aligned}
\right.
\end{equation}

Here, \(C_{m,k}\) is the membrane capacitance of compartment \(k\in\{s,d\}\).The first term on the right-hand side of each equation represents the passive leak current, where $\bar{g}_L$ denotes the maximal leak conductance and $E_L$ the leak reversal potential. The term $I_{\mathrm{ion},k}$ groups the active (voltage- and calcium-dependent) ionic currents of compartment $k$, whose gating variables are collected in the state vector $\mathbf{x}_k$; the detailed expressions of these currents are given in Section~\ref{sec:ionic_currents}. The axial coupling between the two compartments is governed by the conductance $g_c$, scaled by the surface-area ratios $p_s$ and $p_d$ (normalized such that $p_s + p_d = 1$). Finally, $I_{\mathrm{stim},k}(t)$ represents the external current injection applied to compartment $k$ during the experimental protocols.

\subsection{Ionic currents}\label{sec:ionic_currents}
The total ionic current $I_{\mathrm{ion},k}$ in compartment $k \in \{s, d\}$ is the sum of specific voltage-gated and calcium-dependent currents. Each current $I_{X,k}$ is modeled using the Hodgkin-Huxley formalism:
\begin{equation}
I_{X,k} \;=\; \bar g_{X,k}\, \Pi_X(\mathbf{x}_k)\, \big(V_k - E_X\big),
\qquad X\in\{Na,DR,M,H,D,Ca,C,sAHP\},\; k\in\{s,d\},
\end{equation}
where \(\bar g_{X,k}\) is the maximal conductance, \(E_X\) the Nernst reversal potential,and \(\Pi_X(\mathbf{x}_k)\) a product of gating variables in compartment \(k\). The model includes the following currents(see Table~\ref{tab:currents} for the complete formulation):
\begin{itemize}
    \item \textbf{Soma:} fast sodium ($I_{Na}$), delayed rectifier potassium ($I_{DR}$), high-threshold calcium ($I_{Ca}$), and M-type potassium ($I_M$).
    \item \textbf{Dendrite:} In addition to $I_{Na}$, $I_{DR}$, $I_{Ca}$, and $I_M$, the dendrite includes hyperpolarization-activated ($I_H$), D-type potassium ($I_D$), Fast calcium-dependent potassium ($I_C$), and slow AHP ($I_{sAHP}$).
\end{itemize}
The complete set of maximal conductances and passive parameters is detailed in Table \ref{tab:parameters}.
\begin{table}[h!]
\centering
\renewcommand{\arraystretch}{1.3}
\caption{Formulation of ionic currents in the soma ($s$) and dendrite ($d$). 
Here $k \in \{s,d\}$ denotes the compartment.}
\label{tab:currents}
\begin{tabular}{ll|ll}
\hline
\multicolumn{2}{c|}{\textbf{Somatic \& dendritic currents}} 
  & \multicolumn{2}{c}{\textbf{Dendritic-specific currents}} \\
\hline
$I_{Na,k}$  
  & $= \bar{g}_{Na,k}\, m_k^3 h_k \,(V_k - E_{Na})$
  & $I_{H,d}$    
  & $= \bar{g}_{H,d}\, r_d \,(V_d - E_H)$ \\

$I_{DR,k}$  
  & $= \bar{g}_{DR,k}\, n_k^4 \,(V_k - E_K)$
  & $I_{D,d}$    
  & $= \bar{g}_{D,d}\, a_d b_d \,(V_d - E_K)$ \\

$I_{Ca,k}$  
  & $= \bar{g}_{Ca,k}\, s_k^2 u_k \,(V_k - E_{Ca})$
  & $I_{C,d}$    
  & $= \bar{g}_{C,d}\, c_d \,(V_d - E_K)$ \\

$I_{M,k}$   
  & $= \bar{g}_{M,k}\, p_k^2 \,(V_k - E_K)$
  & $I_{sAHP,d}$ 
  & $= \bar{g}_{sAHP,d}\, q_d \,(V_d - E_K)$ \\
  &             
  & \\ 
\hline
\end{tabular}
\end{table}
\begin{table}[H]
\centering
\small
\caption{Physiological parameters and maximal conductances for the Two Compartment Model. Values are typically based on Li et al. (2009)\cite{li2009fearNetwork}.}
\label{tab:parameters}
\begin{tabular}{llcc}
\toprule
\multicolumn{4}{c}{\textbf{Passive Properties \& Constants}} \\
\midrule
Parameter & Symbol & Value & Unit \\
\midrule
Membrane Capacitance & $C_m$ & 1 & $\mu\mathrm{F/cm^2}$ \\
Axial coupling conductance & $g_c$ & 0.1 & $\mathrm{mS/cm^2}$ \\
Leak Conductance & $g_L$ & $0.034$ & $\mathrm{mS/cm^2}$ \\
Leak Reversal Potential & $E_L$ & -75 & mV \\
Sodium Reversal Potential & $E_{Na}$ & 45 & mV \\
Potassium Reversal Potential & $E_K$ & -80 & mV \\
Calcium Reversal Potential & $E_{Ca}$ & 120 & mV \\
$I_H$ reversal potential & $E_H$ & -43 & mV \\ 
\midrule
\multicolumn{4}{c}{\textbf{Maximal Conductances ($\bar{g}$) [$\mathrm{mS/cm^2}$]}} \\
\midrule
Channel & Symbol & Soma & Dendrite \\
\midrule
Sodium ($Na$) & $\bar{g}_{Na}$ & 120 & 40 \\
Delayed Rectifier ($DR$) & $\bar{g}_{DR}$ & 12 & 3 \\
M-type Potassium ($M$) & $\bar{g}_M$ & 0.25 & 0.25 \\
High-thresh. Calcium ($Ca$) & $\bar{g}_{Ca}$ & 0.1 & 0.2 \\
Hyperpolarization ($H$) & $\bar{g}_H$ & - & 0.1 \\
D-type Potassium ($D$) & $\bar{g}_D$ & - & 1.0 \\
Fast Ca-dependent K ($C$) & $\bar{g}_C$ & - & 0.5 \\
Slow AHP ($sAHP$) & $\bar{g}_{sAHP}$ & - & 0.1 \\
\bottomrule
\end{tabular}
\end{table}
\subsection{Gating kinetics}\label{Gating kinetics}
Gating variables follow standard first-order kinetics driven by the local membrane voltage $V_k$:
\begin{equation}
\frac{dx}{dt} = \frac{x_\infty(V_k)-x}{\tau_x(V_k)}, 
\qquad x\in\mathcal{X}_k,\;\; k\in\{s,d\},
\label{eq:gates}
\end{equation}
where the gate sets for the somatic and dendritic compartments are defined as:
\[
\mathcal{X}_s=\{m_s,h_s,n_s,p_s,s_s,u_s\}, 
\qquad 
\mathcal{X}_d=\{m_d,h_d,n_d,p_d,r_d,a_d,b_d,s_d,u_d,c_d,q_d\}.
\]
The steady-state activation $x_\infty(V_k)$ and the time constant $\tau_x(V_k)$ are derived from experimental fits\cite{li2009fearNetwork,li2007modeling}. Each gating variable in $\mathcal{X}_s$ and $\mathcal{X}_d$ corresponds to a specific ionic channel gate, following the classical Hodgkin--Huxley naming conventions (see Table~\ref{tab:kinetics} for the complete formulation):
\begin{itemize}
    \item $m$ and $h$: activation and inactivation gates of the fast sodium current $I_{\mathrm{Na}}$ ;
    \item $n$: activation gate of the delayed-rectifier potassium current $I_{\mathrm{DR}}$;
    \item $p$: activation gate of the M-type potassium current $I_{\mathrm{M}}$;
    \item $s$ and $u$: activation and inactivation gates of the high-threshold calcium current $I_{\mathrm{Ca}}$;
    \item $r$: activation gate of the hyperpolarization-activated current $I_{\mathrm{H}}$ ;
    \item $a$ and $b$: activation and inactivation gates of the D-type potassium current $I_{\mathrm{D}}$;
    \item $c$: activation gate of the fast calcium-dependent potassium current $I_{\mathrm{C}}$;
    \item $q$: activation gate of the slow afterhyperpolarization current $I_{\mathrm{sAHP}}$.
\end{itemize}

The somatic set $\mathcal{X}_s = \{m_s, h_s, n_s, p_s, s_s, u_s\}$ thus contains six gating variables corresponding to the four somatic currents ($I_{\mathrm{Na}}$, $I_{\mathrm{DR}}$, $I_{\mathrm{M}}$, $I_{\mathrm{Ca}}$), while the dendritic set $\mathcal{X}_d$ contains eleven variables, reflecting the four additional dendrite-specific currents ($I_{\mathrm{H}}$, $I_{\mathrm{D}}$, $I_{\mathrm{C}}$, $I_{\mathrm{sAHP}}$). The power exponents (e.g.\ $m^3$, $n^4$, $s^2$) indicate the number of independent and identical subunits that must simultaneously be in the open state for the channel to conduct, as originally proposed by Hodgkin and Huxley~\cite{hodgkin1952quantitative}.

These functions can be equivalently expressed through an $\alpha$--$\beta$ parametrization, corresponding to the two-state Markov scheme shown in Figure \ref{fig:gating_scheme}.

\begin{figure}[h]
\centering
\begin{tikzpicture}[
    >=Latex,
    node distance=3.2cm,
    state/.style={draw, ellipse, minimum width=2.2cm, minimum height=0.9cm},
    lab/.style={font=\large}
]
\node[state] (I) {Inactive ($1-x$)};
\node[state, right=of I] (A) {Active ($x$)};
\draw[->, thick, bend left=35]  (I) to node[above, lab] {$\alpha_x$} (A);
\draw[->, thick, bend left=35]  (A) to node[below, lab] {$\beta_x$} (I);
\end{tikzpicture}
\caption{Two-state gating scheme (Inactive/Active).}
\label{fig:gating_scheme}
\end{figure}
In this formulation, the dynamics follow:
\begin{equation}
\frac{dx}{dt} = \alpha_x(V_k)\,(1-x) - \beta_x(V_k)\,x,
\label{eq:alpha_beta}
\end{equation}
where the conversion between the two formalisms is given by:
\begin{equation}
x_\infty(V_k)=\frac{\alpha_x(V_k)}{\alpha_x(V_k)+\beta_x(V_k)},
\qquad
\tau_x(V_k)=\frac{1}{\alpha_x(V_k)+\beta_x(V_k)}.
\label{eq:xinf_tau_from_alpha_beta}
\end{equation}

Unless stated otherwise, we assume identical kinetic rate functions in the soma and dendrite. However, distinct gating states are retained ($x_s$ vs $x_d$) because the local membrane potentials $V_s(t)$ and $V_d(t)$ generally evolve differently.

\begin{table}[H]
\centering
\footnotesize
\caption{Kinetic functions for gating variables. Note: For $I_H$, $I_D$, $I_{Ca}$, and $I_{sAHP}$, the formulations are given directly as steady-state ($x_\infty$) and time constant ($\tau_x$).}
\label{tab:kinetics}
\setlength{\tabcolsep}{4pt}
\renewcommand{\arraystretch}{1.8} 
\begin{tabular}{llcc}
\toprule
\textbf{Current} & \textbf{Gate} & \textbf{Forward Rate $\alpha(V)$} & \textbf{Backward Rate $\beta(V)$ } \\
\midrule
\multirow{2}{*}{$I_{Na}$} 
 & $m$ & $\alpha_m = \dfrac{-0.2816(V+25)}{e^{-(V+25)/9.3}-1}$ 
     & $\beta_m = \dfrac{0.2464(V-2)}{e^{(V-2)/6}-1}$ \\
 & $h$ & $\alpha_h = 0.098 \, e^{-(V+40.1)/20}$ 
     & $\beta_h = \dfrac{1.4}{1 + e^{-(V+10.1)/10}}$ \\
\midrule
$I_{DR}$ 
 & $n$ & $\alpha_n = \dfrac{-0.036(V-13)}{e^{-(V-13)/25}-1}$ 
     & $\beta_n = \dfrac{0.0108(V-23)}{e^{(V-23)/12}-1}$ \\
\midrule
$I_{M}$ 
 & $p$ & $\alpha_p = \dfrac{0.016}{1 + e^{-(V+52.7)/23}}$ 
     & $\beta_p = \dfrac{0.016}{1 + e^{(V-52.7)/18.8}}$ \\
\midrule
$I_{C}$ 
 & $c$ & $\alpha_c = \dfrac{-0.00642 V_m - 0.1152}{e^{-(V_m+18)/12}-1}$ 
     & $\beta_c = 1.7 \, e^{-(V_m+152)/30}$ \\
 & & \multicolumn{2}{l}{\textit{where} $V_m = V + 40\log_{10}([\mathrm{Ca}]_1)$} \\
\midrule
\midrule
\textbf{Current} & \textbf{Gate} & \textbf{Steady State $x_\infty(V)$} & \textbf{Time Constant $\tau_x(V)$ (ms)} \\
\midrule
$I_{H}$ & $r$ 
    & $\dfrac{1}{1 + e^{(V+89.2)/9.5}}$ 
    & $1727 \, e^{0.019 V}$ \\

\midrule
\multirow{2}{*}{$I_D$}
 & $a$ 
    & $\dfrac{1}{1 + e^{-(V+8.6)/11.1}}$ 
    & $1.5$ \\
 & $b$ 
    & $\dfrac{1}{1 + e^{(V+35)/10}}$ 
    & $569$ \\    
   
\midrule
\multirow{2}{*}{$I_{Ca}$} 
 & $s$ & $\dfrac{1}{1 + e^{(V+21)/9}}$ & $569$ \\
 & $u$ & $\dfrac{1}{1 + e^{(V+24.6)/11.3}}$ & $1.25 \, \mathrm{sech}(-0.031(V+37.1))$ \\
\midrule
$I_{sAHP}$ & $q$ 
    & $\dfrac{0.0048}{1 + e^{-5\log_{10}([\mathrm{Ca}]_2)-17.5}}$ 
    & $\tau_q = 48$ \\
\bottomrule
\end{tabular}
\end{table}
\subsection{Calcium dynamics}

To account for the calcium-dependent currents used in this model, we track two intracellular calcium pools, $[\mathrm{Ca}^{2+}]_1$ and $[\mathrm{Ca}^{2+}]_2$, following a two-pool submembrane-shell approximation \cite{warman1994reconstruction,mainen1998activeDendrites}. The dynamics of these pools are governed by first-order differential equations of the form\cite{warman1994reconstruction,li2008regulationITC,durstewitz2000dopamine}:

\begin{equation}
\frac{d[\mathrm{Ca}^{2+}]_i}{dt} = - \frac{f_i \, I_{Ca,d}}{z F V} + \frac{[\mathrm{Ca}^{2+}]_{\mathrm{rest}} - [\mathrm{Ca}^{2+}]_i}{\tau_i}, \quad i \in \{1, 2\}.
\label{eq:ca_dynamics}
\end{equation}

Here,$f_i$ is the fraction of the total calcium influx captured by pool $i$, $z=2$ is the valence of the $\mathrm{Ca}^{2+}$ ion, $F$ is the Faraday constant, and $V$ is the volume of a dendritic shell with thickness $w = 1\ \mu\text{m}$. The resting and initial concentration is set to $[\mathrm{Ca}^{2+}]_{\mathrm{rest}} = 50\ \text{nmol/l}$. The model distinguishes between two functional calcium pathways, whose parameters were established in prior experimental and computational studies~\cite{warman1994reconstruction,li2009fearNetwork,durstewitz2000dopamine}:
\begin{itemize}
    \item \textbf{Pool 1 (Fast):} Mediates the rapid activation of $I_C$. The influx fraction $f_1 = 0.7$ reflects the proximity of the fast calcium-dependent potassium channels to the calcium entry sites, ensuring a large and rapid rise in local $[\mathrm{Ca}^{2+}]$. The short decay time constant $\tau_1 = 1$\,ms captures the fast buffering and diffusion that limit the duration of this transient~\cite{warman1994reconstruction}.
    \item \textbf{Pool 2 (Slow):} Drives the slow adaptation current $I_{\mathrm{sAHP}}$. The smaller influx fraction $f_2 = 0.024$ accounts for the greater distance of the sAHP channels from the calcium sources, resulting in a much smaller effective calcium signal. The long removal time constant $\tau_2$ reflects the slow clearance mechanisms (pumps, exchangers, and cytoplasmic buffering) governing this distal pool~\cite{durstewitz2000dopamine,li2009fearNetwork}.
\end{itemize}

\section{Impact of coupling strength on somato-dendritic synchronization}
\label{subsec:coupling_dynamics}

To quantify the role of the axial coupling conductance $g_c$ in shaping the joint soma--dendrite dynamics, we performed forward simulations of the full two-compartment model across a range of coupling strengths.
As established in seminal work on reduced neuronal models \cite{pinsky1994intrinsic, mainen1996influence}, $g_c$ controls the rate of axial charge transfer between soma and dendrite and thus determines the degree of electrical compartmentalization.
In all simulations, stimulation was applied to the dendrite only ($I_{\mathrm{stim},s}(t)=0$), and the system was initialized with distinct somatic and dendritic states (e.g., $V_s(0)\neq V_d(0)$; gating variables set to steady state at the corresponding initial voltages unless stated otherwise).
As illustrated in Figure~\ref{fig:coupling_synchronization}, three qualitatively distinct dynamical regimes emerge, in line with asymptotic analyses of compartmental systems \cite{keener2009mathematical}:

\begin{figure}[h!]
    \centering
    \includegraphics[width=\textwidth]{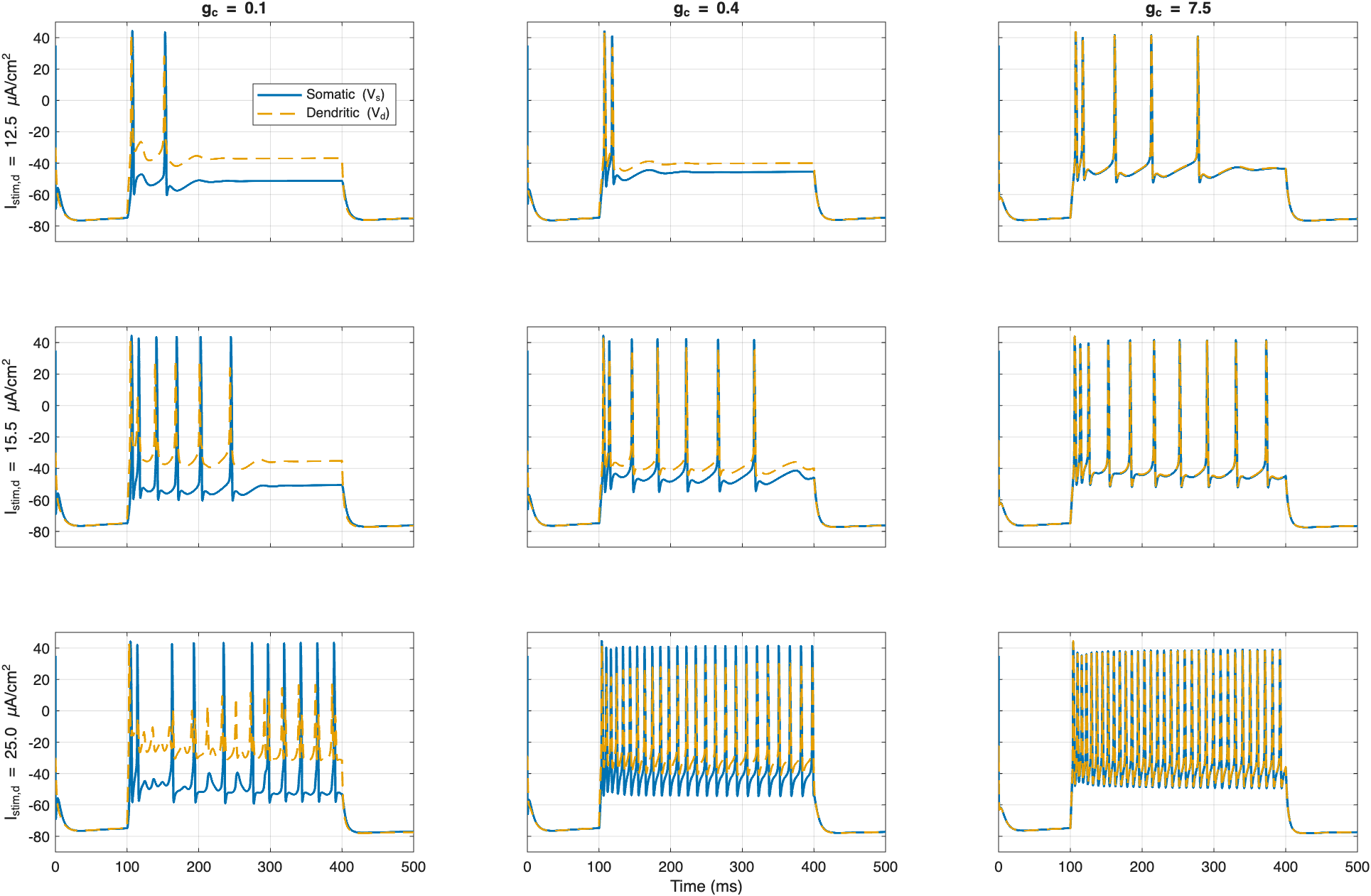}
    \caption{\textbf{Coupling-dependent somato--dendritic synchronization and firing under dendritic stimulation.}
Columns: coupling conductance $g_c\in\{0.1,0.4,7.5\}$; rows: dendritic injected current $I_{\mathrm{stim},d}\in\{12.5,15.5,25.0\}\,\mu\mathrm{A/cm^2}$.
Solid blue: somatic voltage $V_s$; dashed orange: dendritic voltage $V_d$.
Stimulation is applied to the dendrite only during $t\in[100,400]$~ms ($I_{\mathrm{stim},s}=0$), and voltage dynamics follow Eq.~\eqref{eq:voltage_dynamics}.
Simulations are initialized with different somatic and dendritic initial conditions ($V_s(0)\neq V_d(0)$).
Increasing $g_c$ reduces the soma-dendrite voltage mismatch and promotes spike-time locking; for a fixed $I_{\mathrm{stim},d}$, stronger coupling also tends to increase the somatic firing rate by improving transmission of dendritic drive to the soma.}
\label{fig:coupling_synchronization}
\end{figure}

\begin{itemize}
    \item \textbf{Weak coupling regime ($g_c \to 0$).} 
    The soma and dendrite behave almost as independent oscillators. Somatic spikes fail to invade the dendrite, which exhibits only small, subthreshold deflections. In this limit, the dendritic compartment behaves as an electrotonically distant structure, only weakly influenced by somatic activity.

    \item \textbf{Intermediate coupling and bidirectional propagation.}
    As $g_c$ increases from near zero to physiologically plausible values (typically in the range $0.3$--$2$\,mS/cm$^2$ for pyramidal neurons 
of the basolateral amygdala~\cite{li2009fearNetwork,pinsky1994intrinsic}), the coupling becomes effectively bidirectional. Large somatic depolarizations can overcome the axial resistance and propagate into the dendritic arbor, while dendritic events can in turn influence somatic excitability. In particular, this regime supports the retrograde propagation of action potentials from soma to dendrite, a phenomenon known as the backpropagating action potential (bAP) \cite{stuart1994active}. 

    \item \textbf{Strong coupling regime ($g_c \to \infty$).}
    In the asymptotic limit of large $g_c$, the axial current dominates the mismatch dynamics and forces $V_d(t)$ to closely track $V_s(t)$ with little delay or attenuation. The two compartments then behave as an approximately isopotential  unit, the residual mismatch
    decreasing as $\mathcal{O}(1/g_c)$ (see Proposition~\ref{prop:sync_bound} 
in Section~\ref{sec:sync_stability}).
\end{itemize}
\subsection{Frequency modulation via dendritic loading}

The coupling conductance $g_c$ influences both soma--dendrite coordination and the input--output gain of the neuron. As $g_c$ varies, we observe two robust effects.
\begin{itemize}
    \item \textbf{Weak coupling produces dendritic attenuation.}
    When $g_c$ is small ($g_c \approx 0.1\,\mathrm{mS/cm^2}$), the axial pathway strongly filters fast somatic events. As a result, dendritic depolarizations remain low-amplitude and subthreshold, whereas the soma continues to generate full action potentials. This behavior is consistent with the electrotonic attenuation predicted by passive cable theory \cite{rall1962electrophysiology}.

\item \textbf{Dendritic loading modulates the somatic $f$--$I$ relationship.}
    For somatic drive, the dendritic compartment acts as an additional capacitive and
    resistive load: it draws axial current during the pre-spike depolarization, reduces the
    somatic depolarization slope and lowers the firing rate, an effect documented in
    pyramidal neurons \cite{eyal2014dendritic}. Under the dendritic stimulation used in
    Figure~\ref{fig:coupling_synchronization} the balance is reversed, since the injected
    current must first cross the axial pathway to reach the spike-generating zone: increasing
    $g_c$ then improves the transmission of the dendritic drive and increases the somatic
    firing rate. In both cases $g_c$ acts as a biophysical gain-control parameter, and its
    effect on the firing rate depends on where the current is injected.
 
\end{itemize}

The coupling conductance therefore does more than shape synchronization: it also sets the gain
between the injected current and the somatic firing rate, through the resistive and capacitive
load that each compartment imposes on the other.

\subsection{Asymptotic bound on the soma–dendrite voltage mismatch}\label{sec:sync_stability}

To formally analyze the stability of the synchronized state, we define the synchronization error
\[
    e(t) = V_d(t) - V_s(t).
\]
Let $S(V_s, \mathbf{x}_s)$ and $D(V_d, \mathbf{x}_d, [\mathrm{Ca}]_d)$ denote the sum of intrinsic ionic currents in the soma and dendrite, respectively, where $\mathbf{x}_k$ represents all gating variables in compartment $k$:
\begin{align*}
S(V_s, \mathbf{x}_s) &=  -\Big(I_{Na,s} + I_{DR,s} + I_{M,s} + I_{Ca,s} + I_{L,s}\Big), \\
D(V_d, \mathbf{x}_d, [\mathrm{Ca}]_d) &= -\Big(I_{Na,d} + I_{DR,d} + I_{M,d} + I_{H,d} + I_{D,d} + I_{Ca,d} + I_{C,d} + I_{sAHP,d} + I_{L,d}\Big).
\end{align*}

Assuming the same specific membrane capacitance $C_m$ in both compartments and subtracting the
somatic charge-balance equation from the dendritic one in \eqref{eq:voltage_dynamics} yields the
error dynamics
\begin{equation}
    C_m \dot{e}(t) = D(V_d, \mathbf{x}_d, [\mathrm{Ca}]_d) - S(V_s, \mathbf{x}_s)
    - \gamma\, e(t) + I_{\mathrm{stim},d} - I_{\mathrm{stim},s},
    \qquad
    \gamma = g_c\left(\frac{1}{p_s}+\frac{1}{p_d}\right),
    \label{eq:error_dynamics_full}
\end{equation}
where $\gamma$ is the effective coupling coefficient. With the surface-area normalization
$p_s+p_d=1$ used throughout, and $p_s=p_d=1/2$, this gives $\gamma=4g_c$.

\paragraph{A dissipative bound on the frozen-gate voltage term.}
Let $\Omega=[V_{\min},V_{\max}]\times[0,1]^{N_x}\times[0,\mathrm{Ca}_{\max}]$ be a physically
relevant bounded invariant set for the full state $(V,\mathbf x,[\mathrm{Ca}])$, where $N_x$ is
the total number of gating variables. All Hodgkin--Huxley currents and gating kinetics are
continuously differentiable, so $S$ and $D$ are $C^1$ on $\Omega$ and their Jacobians are
bounded by compactness. We split the ionic difference as
\begin{equation}
D(V_d, \mathbf{x}_d, [\mathrm{Ca}]_d) - S(V_s, \mathbf{x}_s)
= \underbrace{D(V_d, \mathbf{x}_d, [\mathrm{Ca}]_d) - D(V_s, \mathbf{x}_d, [\mathrm{Ca}]_d)}_{\text{voltage term}}
+ \underbrace{D(V_s, \mathbf{x}_d, [\mathrm{Ca}]_d) - S(V_s, \mathbf{x}_s)}_{\Delta_{\mathrm{state}}(t)} .
\label{eq:decomposition}
\end{equation}
For fixed gating variables and calcium concentrations, every current is ohmic in $V$, so the
voltage term is governed by the frozen-gate slope conductance
\[
\frac{\partial D}{\partial V}(V,\mathbf x,[\mathrm{Ca}])
=-\Big(\bar g_{L,d}+\sum_{X}\bar g_{X,d}\,\Pi_X(\mathbf x)\Big)
\;\le\;-\bar g_{L,d}\;=:\;-g_{\min,d}\;<\;0,
\]
because $\Pi_X(\mathbf x)\ge 0$ for every current $X$. The mean value theorem then gives, for any
$V_s,V_d$,
\begin{equation}
\operatorname{sgn}(e)\big[D(V_d,\mathbf x_d,[\mathrm{Ca}]_d)-D(V_s,\mathbf x_d,[\mathrm{Ca}]_d)\big]
\;\le\;-g_{\min,d}\,|e| .
\label{eq:dissipative_voltage_term}
\end{equation}
The voltage term is therefore dissipative: it contracts the mismatch instead of amplifying it.
The second term, $\Delta_{\mathrm{state}}(t)$, captures the differences in gating variables and
calcium concentrations between the two compartments, as well as the difference in channel
complement (the dendrite-specific currents). It is uniformly bounded on $\Omega$, and together
with the stimulation difference we set
\[
\Delta_{\mathrm{total}}(t)=\Delta_{\mathrm{state}}(t)+I_{\mathrm{stim},d}(t)-I_{\mathrm{stim},s}(t),
\qquad |\Delta_{\mathrm{total}}(t)|\le B_{\mathrm{tot}} .
\]
 
\begin{proposition}[Practical synchronization]
\label{prop:sync_bound}
Let $\gamma=g_c\,(1/p_s+1/p_d)$ be the effective coupling coefficient,
$g_{\min,d}=\bar g_{L,d}>0$ the minimal frozen-gate slope conductance of the dendritic
compartment, and $B_{\mathrm{tot}}$ the uniform bound on $\Delta_{\mathrm{total}}(t)$. Then, for
every $g_c>0$, the voltage mismatch $e(t)=V_d(t)-V_s(t)$ satisfies
\[
|e(t)|\;\le\;|e(0)|\,e^{-(\gamma+g_{\min,d})\,t/C_m}
+\frac{B_{\mathrm{tot}}}{\gamma+g_{\min,d}}
\Big(1-e^{-(\gamma+g_{\min,d})\,t/C_m}\Big),
\]
and in particular
\[
\limsup_{t\to\infty}|V_d(t)-V_s(t)|\;\le\;\frac{B_{\mathrm{tot}}}{\gamma+g_{\min,d}}
\;=\;\mathcal O\!\left(\frac{1}{g_c}\right)\quad\text{as } g_c\to\infty .
\]
The mismatch therefore converges exponentially to a ball whose radius shrinks like $1/g_c$; it
converges to zero only in the degenerate case $B_{\mathrm{tot}}=0$.
\end{proposition}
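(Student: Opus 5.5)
The plan is a scalar comparison argument applied to $|e(t)|$, using the error dynamics \eqref{eq:error_dynamics_full}, the decomposition \eqref{eq:decomposition} and the dissipative estimate \eqref{eq:dissipative_voltage_term}. Write $\lambda=(\gamma+g_{\min,d})/C_m$. The goal is to show that $|e|$ satisfies, in an appropriate weak sense, the differential inequality
\[
C_m\,\frac{d}{dt}|e(t)|\;\le\;-(\gamma+g_{\min,d})\,|e(t)|+B_{\mathrm{tot}} .
\]
The stated bound then follows by a Grönwall-type comparison with the linear ODE $C_m\dot y=-(\gamma+g_{\min,d})y+B_{\mathrm{tot}}$, $y(0)=|e(0)|$. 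Its explicit solution is $y(t)=|e(0)|e^{-\lambda t}+\frac{B_{\mathrm{tot}}}{\gamma+g_{\min,d}}(1-e^{-\lambda t})$. Letting $t\to\infty$ gives the $\limsup$ estimate. Finally, $\gamma=g_c(1/p_s+1/p_d)$ is linear in $g_c$, and $B_{\mathrm{tot}}$ does not depend on $g_c$ (see below), so $B_{\mathrm{tot}}/(\gamma+g_{\min,d})=\mathcal O(1/g_c)$.

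To get the differential inequality, I would first work where $e(t)\neq 0$. There $|e|$ is differentiable with $\frac{d}{dt}|e|=\operatorname{sgn}(e)\dot e$. Multiplying \eqref{eq:error_dynamics_full} by $\operatorname{sgn}(e)$ and using \eqref{eq:decomposition} gives three pieces. The voltage term is bounded by $-g_{\min,d}|e|$ via \eqref{eq:dissipative_voltage_term}. The coupling term equals exactly $-\gamma|e|$. The remaining term $\operatorname{sgn}(e)\Delta_{\mathrm{total}}$ is bounded by $B_{\mathrm{tot}}$. At points where $e(t)=0$, $|e|$ may fail to be differentiable. To avoid this, I would use the upper Dini derivative $D^+|e|(t)$, which at such points is at most $|\dot e(t)|=|\Delta_{\mathrm{total}}(t)|/C_m\le B_{\mathrm{tot}}/C_m$, so the inequality still holds there. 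Alternatively, I would work with the smooth surrogate $\sqrt{e^2+\varepsilon^2}$ and let $\varepsilon\to0$. Either way, the standard comparison lemma for Dini derivatives applies. A cleaner route is to set $w(t)=e^{\lambda t}|e(t)|$. Then $D^+w\le e^{\lambda t}B_{\mathrm{tot}}/C_m$, and integrating (valid since $w$ is absolutely continuous, as $e$ is $C^1$) yields the bound directly.

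The main obstacle is justifying the uniform bound $B_{\mathrm{tot}}$, and in particular checking that it does not secretly depend on $g_c$. Otherwise the $\mathcal O(1/g_c)$ conclusion fails. The bound must use only the invariant set $\Omega$. On $\Omega$, $\Delta_{\mathrm{state}}=D(V_s,\mathbf x_d,[\mathrm{Ca}]_d)-S(V_s,\mathbf x_s)$ is a continuous function on a compact set whose definition does not involve $g_c$. The stimuli are bounded, piecewise-continuous protocols, so they add a bounded contribution. I would therefore state explicitly that $\Omega$ is assumed forward invariant for all $g_c$ in the range considered. This holds because the gates stay in $[0,1]$, and because the voltages stay within $[\min(E_K,\ldots),\max(E_{Ca},\ldots)]$ adjusted for the bounded stimulus. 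The coupling term pulls each voltage toward the other, hence toward the interior of that interval, so it does not break invariance. With this uniform $B_{\mathrm{tot}}$ in hand, the remaining steps are routine.
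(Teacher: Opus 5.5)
Your proposal is correct and is essentially the paper's proof. You multiply \eqref{eq:error_dynamics_full} by $\operatorname{sgn}(e)$, use \eqref{eq:dissipative_voltage_term} and $|\Delta_{\mathrm{total}}|\le B_{\mathrm{tot}}$ to get $C_m\,\tfrac{d}{dt}|e|\le-(\gamma+g_{\min,d})|e|+B_{\mathrm{tot}}$, and conclude by a Gr\"onwall comparison. Your Dini-derivative handling of the zeros of $e$ and your check that $\Omega$, and hence $B_{\mathrm{tot}}$, does not depend on $g_c$ supply details the paper leaves implicit (one small correction: with the step and pulse protocols $e$ is Lipschitz and piecewise $C^1$ rather than $C^1$, which is all your integration of $e^{\lambda t}|e(t)|$ needs).
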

 
\begin{proof}
Multiplying \eqref{eq:error_dynamics_full} by $\operatorname{sgn}(e)$ and using
\eqref{eq:dissipative_voltage_term} together with $|\Delta_{\mathrm{total}}|\le B_{\mathrm{tot}}$
gives
\[
C_m\frac{d|e|}{dt}\;=\;C_m\operatorname{sgn}(e)\,\dot e
\;\le\;-(\gamma+g_{\min,d})\,|e|+B_{\mathrm{tot}} .
\]
Since $\gamma+g_{\min,d}>0$, a Gr\"onwall-type argument yields the stated estimate.
\end{proof}
 
\begin{remark}
The estimate is a statement of practical synchronization: it bounds the asymptotic voltage
mismatch and does not assert asymptotic stability of an exact synchronized state. The two
compartments carry different channel complements, so $B_{\mathrm{tot}}>0$ and a residual
mismatch persists at any finite coupling. No lower bound on $g_c$ is required: the contraction
of the voltage mismatch is already provided by the dendritic leak, and the coupling only
accelerates it and shrinks the residual ball. This is sharper than the classical worst-case
argument, in which the frozen-gate term is bounded in absolute value by
$L_{\mathrm{volt}}=\bar g_{L,d}+\sum_X\bar g_{X,d}$ and a condition of the form
$\gamma>L_{\mathrm{volt}}$ is imposed: with the dendritic conductances of
Table~\ref{tab:parameters}, $L_{\mathrm{volt}}\approx45\ \mathrm{mS/cm^2}$, so that condition
would require $g_c\gtrsim11\ \mathrm{mS/cm^2}$, above every coupling simulated here, whereas
Figure~\ref{fig:coupling_synchronization} already shows near-isopotential behaviour at
$g_c=7.5\ \mathrm{mS/cm^2}$. Finally, the bound concerns the voltage mismatch only; the gating
and calcium mismatches are absorbed in $B_{\mathrm{tot}}$, which is why the residual error does
not vanish. The $\mathcal O(1/g_c)$ decay is consistent with theoretical results on electrically
coupled neuronal oscillators \cite{chow2000dynamics} and, qualitatively, with the coupled
bursting systems analysed by Belykh et al.~\cite{belykh2005synchronization}. The reduction to a
quasi-isopotential regime at large $g_c$ justifies, a posteriori, the use of single-compartment
approximations in the high-conductance limit \cite{keener2009mathematical}.
\end{remark}

\subsection{The inverse problem in the weak coupling regime}

The analytical results and numerical simulations above demonstrate that weak coupling ($g_c \lesssim 0.5\,\mathrm{mS/cm^2}$) leads to a pronounced decoupling of somatic and dendritic dynamics: the dendritic trace $V_d(t)$ becomes a strongly low-pass filtered and attenuated version of the somatic output.

This desynchronization is precisely the central challenge addressed in this work. In this regime, simple correlation measures or linear extrapolation are fundamentally insufficient to reconstruct the rich, nonlinear somatic spiking dynamics from the smooth dendritic recordings; the information appears locally lost.

However, our approach relies on the hypothesis that the information is not destroyed, but rather transformed by the coupling equation
\begin{equation}
    I_{\mathrm{axial}}(t) = g_c \,\bigl(V_s(t) - V_d(t)\bigr).
\end{equation}
Embedding this constraint, together with the two current-balance equations, in the loss
function of a neural network is a way of testing that hypothesis. We therefore ask, in the
weak-to-moderate coupling regime, how much of the somatic spiking activity can be recovered
from the dendritic trace, and how much additional somatic information is needed. The answer
given in Section~\ref{sec:results} is specific to the synthetic two-compartment setting, the
four stimulation protocols and the coupling value considered here: within this setting, the
dendritic observations alone were not sufficient under the present loss, and a small fraction
of somatic samples had to be added.

\subsection{Relationship to classical and PINN-based inverse methods}
\label{sec:classical_pinn_inverse_methods}

Inverse problems in conductance-based neuronal models have a long history in computational neuroscience. In their most general form, these problems consist in estimating hidden states, unknown parameters, or sometimes unknown inputs from partial electrophysiological observations. Depending on the experimental setting, the unknown quantities may include maximal conductances, initial conditions, gating variables, kinetic parameters entering the voltage-dependent rate functions, or the unobserved membrane potential of a specific compartment. These problems are difficult because conductance-based models are nonlinear, stiff, partially observed, and often non-identifiable: different combinations of parameters may produce similar voltage traces. Similar difficulties arise in inverse problems for differential equations more generally, where partial or indirect observations often lead to ill-posed reconstruction tasks that require additional regularization or structural constraints~\cite{huntul2022inverse}.

A first class of approaches relies on direct biophysical parameter fitting from electrophysiological protocols. In classical conductance-based modeling, voltage-clamp data are often used to determine ionic current amplitudes and voltage-dependent gating kinetics, while current-clamp recordings are used to tune maximal conductances so that the model reproduces firing patterns. This strategy originates from classical Hodgkin--Huxley-type modeling~\cite{hodgkin1952quantitative}
and remains standard in more detailed conductance-based neuron models,
such as the CA3 pyramidal-cell model of Traub et al.~\cite{traub1991model},
which incorporated voltage-clamp data on intrinsic conductances. However, such procedures require rich experimental protocols and often assume that the relevant compartment is directly recorded. They become much harder when only one compartment is observed and the target dynamics occur in another, hidden compartment.

A second family of methods is based on statistical and optimization-based parameter estimation. Huys et al.~\cite{huys2006efficient} developed efficient methods for estimating parameters of detailed single-neuron models from electrophysiological data, showing that statistical inference can be used to constrain complex conductance-based models. Daly et al.~\cite{daly2015hodgkin} studied the identifiability of the classical Hodgkin--Huxley model and showed that reparametrization can be necessary because some parameters or parameter combinations are difficult to identify uniquely from voltage traces. These studies emphasize an important point for the present work: estimating all parameters of a Hodgkin--Huxley model at once is generally ill-conditioned. For this reason, we restrict the inverse problem to a selected set of somatic maximal conductances while keeping the voltage-dependent gating kinetics fixed.

Kalman-based filtering methods form another important class of inverse
techniques. These methods treat the neuron model as a dynamical state-space
system and recursively update the estimated states and parameters as new
observations become available. Moye and Diekman~\cite{moye2018data} used data
assimilation methods, including Unscented Kalman Filtering, for neuronal state
and parameter estimation. Lankarany et al.~\cite{lankarany2013joint} applied a
Dual Extended Kalman Filter to estimate states and conductances in
Hodgkin--Huxley-type models. Such filters are attractive because they are
sequential and can be used in online settings. However, when parameters are
estimated by augmenting the state vector, the resulting system can become
high-dimensional and sensitive to initialization, model mismatch, and partial
observability. This last limitation is directly relevant to the setting studied
here, and for this reason an unscented Kalman filter is used in
Section~\ref{sec:ukf} as a baseline against which the PINN reconstruction is
evaluated. Both estimators are then constrained by the same two-compartment
model and assimilate the same observations, so that the comparison isolates the
effect of imposing the model globally rather than sequentially. Related state-estimation work has tracked Hodgkin--Huxley dynamics from partial
recordings~\cite{UllahSchiff2009}, identified nonlinear systems from uncertain and indirect
measurements~\cite{Voss2004}, estimated parameters and predicted membrane voltages in
conductance-based models~\cite{Meliza2014BiolCyb,Taylor2020PCBI}, and estimated synaptic
conductances in the presence of subthreshold nonlinearities~\cite{Vich2017Frontiers}.

Variational data assimilation provides a complementary strategy. Instead of updating the solution sequentially, variational methods optimize the hidden trajectory and parameters over an entire observation window. Kadakia et al.~\cite{kadakia2016nonlinear} used nonlinear statistical data assimilation to estimate hidden states and parameters in conductance-based neuronal models, including multi-variable neuronal systems. This full-window formulation is powerful because it uses information from the complete time interval, but it can be computationally expensive and may require careful regularization when the observed variables provide only weak constraints on the hidden dynamics.

More recently, gradient-based approaches have been used to estimate parameters by differentiating through the numerical simulation of the neuron model. In this direction, backpropagation through time (BPTT) has been applied to unrolled Hodgkin--Huxley dynamics for automatic conductance estimation~\cite{li2025bptt}. Such approaches are natural when the model equations are implemented as differentiable recurrent computations. However, they may still face difficulties related to stiffness, long-time gradient propagation, and the identifiability of weak or slow conductances.

PINNs have been increasingly explored for biological inverse problems. Rudi et al.~\cite{Rudi2022FHNNNEstimation} used dense and convolutional neural networks for parameter estimation in the FitzHugh--Nagumo model, a reduced model of excitable dynamics. Herrero Martin et al.~\cite{HerreroMartin2022EPPINNs} introduced EP-PINNs for characterizing cardiac electrophysiology. Yao et al.~\cite{yao2023bioe} proposed BioE-PINN for bioelectrical signals, showing that physics-informed learning can improve parameter estimation when data are limited. Ferrante et al.~\cite{Ferrante2022PINNHH} applied PINNs to inverse problems in Hodgkin--Huxley neuron models. These works demonstrate that PINNs can combine sparse observations with differential-equation constraints to recover hidden states and parameters in nonlinear electrophysiological systems. Beyond electrophysiology, PINN-based methods have also been used for other nonlinear differential-equation models. For instance, Berkhahn and Ehrhardt~\cite{berkhahn2022pinncovid} used a PINN framework for parameter estimation and simulation in an ODE-based dynamical system, while Ali~\cite{ali2025pinnpbm} applied a PINNs approach to nonlinear population balance models. Related neural-network solvers for time-dependent differential equations and ordinary differential equations have also been developed in~\cite{schneidereit2023adaptive,nam2022error}. These studies support the broader methodological idea that neural approximators can be combined with differential-equation constraints to solve forward or inverse problems from limited data.

The present study focuses on a different inverse setting from most of the approaches described above. Here, the main observed signal is the dendritic voltage $V_d(t)$, together with the known dendritic stimulation current $I_{\mathrm{stim},d}(t)$. 
The somatic voltage $V_s(t)$, which represents the spike-generating output of the neuron, is only weakly observed through sparse somatic samples and is otherwise treated as a hidden trajectory. 
Thus, the objective is not simply to estimate parameters from a directly recorded membrane-potential trace. 
Instead, the goal is to recover the somatic output from dendritic measurements by exploiting the soma--dendrite axial coupling and the current-balance equations of a weakly coupled two-compartment Hodgkin--Huxley model. 
In this regime, the dendritic voltage is a filtered and attenuated representation of somatic activity, which makes the inverse problem particularly sensitive to partial observability and compartmental decoupling.

We deliberately keep the inverse problem restricted to a controlled set of unknowns. 
The injected dendritic current is treated as a known input, and the ionic-current expressions, passive parameters, and voltage-dependent gating kinetics are fixed by the biophysical model specified in Tables~\ref{tab:currents},\ref{tab:parameters},\ref{tab:kinetics}. 
The PINN therefore does not learn the gating rate functions themselves. 
The trainable biophysical parameters are limited to the selected somatic maximal conductances
\[
\theta_g=\{\bar g_{Na,s},\bar g_{DR,s},\bar g_{M,s},\bar g_{Ca,s}\}.
\]
This formulation allows us to focus on a well-defined reconstruction problem: recovering the hidden somatic output $V_s(t)$ from dendritic observations under weak coupling, while estimating only a small set of key somatic conductances.

A PINN-based framework is well suited to this partially observed compartmental setting for several reasons. 
First, the physics-informed loss provides a mechanism for propagating information from the observed dendritic compartment to the hidden somatic compartment through the governing equations, in particular the membrane current-balance equations and the soma--dendrite axial coupling term. 
Second, the continuous-time differentiable representation allows the voltage equations, gating dynamics, calcium dynamics, and axial-coupling residuals to be incorporated directly into the loss function. 
Third, the framework supports joint state and parameter estimation: the neural network reconstructs the latent trajectory, while the selected somatic maximal conductances are optimized as global trainable parameters. 
Finally, because PINNs are mesh-free and do not require the solution to be represented only on a fixed numerical grid, the same principle is naturally compatible with future extensions to more complex neuronal morphologies, where information propagates through a dendritic tree governed by compartmental or cable-type equations.

\section{Methods: PINN for somatic voltage reconstruction and conductance identification}
\label{sec:methods_pinn}

Figure~\ref{fig:pinn_framework}   provides a schematic overview of the 
complete inverse framework. We address the inverse problem of reconstructing the hidden somatic voltage trajectory $V_s(t)$ from dense dendritic recordings  complemented by sparse somatic samples, in a two-compartment conductance-based model. The forward dynamics membrane charge balance, Hodgkin–Huxley ionic currents, gating kinetics, and calcium pools are fully specified in Section~\ref{sec:biophysical_model}.

\begin{figure}[H]
\centering
\resizebox{\textwidth}{!}{%
\begin{tikzpicture}[
    font=\small,
    node distance=6mm and 10mm,
    >={Stealth[length=2.5mm]},
    panel/.style={rounded corners=6pt, draw=black!40, line width=0.4pt,
                  inner sep=3mm, align=center},
    block/.style={rounded corners=3pt, draw=black!60, line width=0.4pt,
                  minimum width=30mm, minimum height=9mm, align=center,
                  font=\footnotesize},
    highlight/.style={block, fill=red!15, draw=red!60},
    inputblock/.style={block, fill=gray!15, draw=gray!70},
    physblock/.style={block, fill=orange!10, draw=orange!60},
    nnblock/.style={block, fill=blue!10, draw=blue!60,
                    minimum width=46mm},
    lossblock/.style={block, fill=red!10, draw=red!60,
                      minimum width=46mm, minimum height=18mm},
    arr/.style={-{Stealth[length=2.5mm]}, line width=0.7pt},
    bigarr/.style={-{Stealth[length=3.5mm]}, line width=1.2pt, draw=black!70},
    backprop/.style={-{Stealth[length=3mm]}, line width=0.8pt,
                     dashed, draw=violet!70}
]

\node[inputblock] (t) {$t$\\[1pt]\scriptsize time};
\node[inputblock, below=of t] (istim)
    {$I_{\mathrm{stim},d}(t)$\\[1pt]
     \scriptsize dendritic input};
\node[inputblock, below=of istim] (vd)
    {$V_d^{\mathrm{obs}}(t), V_s^{\mathrm{sparse}}$\\[1pt]
     \scriptsize observed data};

\begin{pgfonlayer}{background}
  \node[panel, fill=gray!5, draw=gray!40,
        fit=(t)(istim)(vd),
        label={[gray!80,font=\bfseries]above:Inputs}]
        (P0) {};
\end{pgfonlayer}

\node[nnblock, right=10mm of istim] (mlp)
    {MLP $\mathcal{N}_\phi$\\[1pt]
     \scriptsize Fourier features};

\node[highlight, above=of mlp, minimum width=46mm] (vs)
    {$\widehat{V}_s(t)$ --- hidden\\[1pt]
     \scriptsize reconstructed soma};
\node[nnblock, below=of mlp] (aux)
    {Learned conductances\\[2pt]
     \scriptsize $\theta_g=\{\bar g_{\mathrm{Na},s},\,\bar g_{\mathrm{DR},s},\,
                            \bar g_{\mathrm{M},s},\,\bar g_{\mathrm{Ca},s}\}$};

\begin{pgfonlayer}{background}
  \node[panel, fill=blue!5, draw=blue!30,
        fit=(vs)(mlp)(aux),
        label={[blue!70,font=\bfseries]above:Neural network}]
        (P1) {};
\end{pgfonlayer}

\draw[arr] (mlp) -- (vs);
\draw[arr] (mlp) -- (aux);

\node[physblock, right=10mm of mlp] (hh)
    {Hodgkin--Huxley\\[1pt]
     \scriptsize $C\dot V = -\!\sum I_{\mathrm{ion}} + g_c(V_s\!-\!V_d)$};
\node[physblock, above=of hh] (autodiff)
    {Automatic differentiation\\[1pt]
     \scriptsize $\partial \widehat{y}/\partial t$};
\node[block, fill=orange!20, draw=orange!70, below=of hh] (res)
    {ODE residual $r(t)$\\[1pt]
     \scriptsize $r = \partial\widehat{y}/\partial t - f(\widehat{y})$};

\begin{pgfonlayer}{background}
  \node[panel, fill=orange!5, draw=orange!40,
        fit=(autodiff)(hh)(res),
        label={[orange!80,font=\bfseries]above:Biophysics}]
        (P2) {};
\end{pgfonlayer}

\draw[arr] (autodiff) -- (hh);
\draw[arr] (hh) -- (res);

\node[lossblock, right=10mm of hh] (ltot)
    {\textbf{Total loss}\\[3pt]
     $\mathcal{L} = \lambda_d\,\mathcal{L}_{V_d}
                  + \lambda_s\,\mathcal{L}^{\mathrm{sparse}}_{V_s}
                  +
                  \lambda_{\mathrm{ph}}\,\mathcal{L}_{\mathrm{phys}}
                  + \lambda_{\mathrm{IC}}\,\mathcal{L}_{\mathrm{IC}}$};

\begin{pgfonlayer}{background}
  \node[panel, fill=red!5, draw=red!40,
        fit=(ltot),
        label={[red!70,font=\bfseries]above:Loss function}]
        (P3) {};
\end{pgfonlayer}

\draw[bigarr] (P0.east) -- (P1.west);
\draw[bigarr] (P1.east) -- (P2.west);
\draw[bigarr] (P2.east) -- (P3.west);

\draw[backprop] (P3.south) -- ++(0,-16mm) -| (P1.south)
    node[pos=0.25, below, font=\scriptsize, black]
    { };

\end{tikzpicture}
}
\caption{Simplified overview of the PINN-based inverse framework.
The time variable and injected dendritic current are provided to the neural
network, while the observed dendritic voltage is enforced through the
data-fidelity term in the loss function.}
\label{fig:pinn_framework}
\end{figure}
\subsection{Inverse problem and network representation}
\label{sec:inverse_problem}

In the inverse setting, the dendritic voltage $V_d(t)$ and the injected dendritic current 
$I_{\mathrm{stim},d}(t)$ are densely observed, the somatic voltage $V_s(t)$ is weakly observed through a small number of sparse somatic samples, and the somatic and dendritic gating variables, together with the calcium concentrations, are treated as fully latent states. The goal is to reconstruct the latent trajectory
\[
y(t)=\left(V_s(t),V_d(t),x_s(t),x_d(t),[\mathrm{Ca}]_1(t),[\mathrm{Ca}]_2(t)\right)
\]
and to estimate the selected somatic maximal conductances
\[
\theta_g=\{\bar g_{Na,s},\bar g_{DR,s},\bar g_{M,s},\bar g_{Ca,s}\}.
\]
Here, $x_s$ and $x_d$ denote the collections of somatic and dendritic gating variables, respectively. 
All remaining biophysical quantities, including the axial coupling $g_c$, the dendritic conductances, the ionic-current expressions, and the voltage-dependent gating kinetics, are fixed according to Tables~\ref{tab:currents}, \ref{tab:parameters} and~\ref{tab:kinetics}. In particular, the PINN does not learn the gating rate functions themselves; the kinetics used in the physics residual are those specified in Table~\ref{tab:kinetics}. The injected dendritic current is treated as a known input.

For each stimulation protocol $r$, the PINN approximates the latent trajectory by a neural network $N_\phi$:
\[
\hat y(t;\phi)=
N_\phi\left(
t_{\mathrm{norm}},
\gamma(t_{\mathrm{norm}}),
I_{\mathrm{norm}}(t),
e_r
\right).
\]
In this expression, $\phi$ denotes all trainable weights and biases of the neural network. The vector $e_r$ is a learned embedding that identifies the stimulation protocol $r$; it allows a single network to represent trajectories generated by different input protocols. The output $\hat y(t;\phi)$ is the predicted full state vector,
\[
\hat y(t;\phi)=
\left(
\hat V_s(t),
\hat V_d(t),
\hat x_s(t),
\hat x_d(t),
[\widehat{\mathrm{Ca}}]_1(t),
[\widehat{\mathrm{Ca}}]_2(t)
\right).
\]
The observed dendritic voltage $V_d^{\mathrm{obs}}(t)$ is not used as a direct input to the network. Instead, it is enforced through the dendritic data-fidelity term in the loss function. Thus, the network receives time, the known injected current, and the protocol identifier as inputs, and it is constrained to produce a dendritic voltage consistent with the observations.

The physical time variable is normalized to the interval $[-1,1]$ by
\[
t_{\mathrm{norm}}
=
2\frac{t-t_{\min}}{t_{\max}-t_{\min}}-1,
\]
where $t_{\min}=0$ ms and $t_{\max}=500$ ms in the simulations reported below. This normalization improves the conditioning of the neural-network approximation. When derivatives with respect to physical time are required, the chain rule gives
\[
\frac{d}{dt}
=
\frac{2}{t_{\max}-t_{\min}}
\frac{d}{dt_{\mathrm{norm}}}.
\]

To improve the representation of sharp action-potential transitions, the normalized time coordinate is mapped to Fourier features:
\[
\gamma(t_{\mathrm{norm}})
=
\left(
\sin(Bt_{\mathrm{norm}}),
\cos(Bt_{\mathrm{norm}})
\right)
\in \mathbb{R}^{2N_f}.
\]
Here, $B=(B_0,\ldots,B_{N_f-1})^\top$ is a vector of prescribed angular frequencies, and the sine and cosine functions are applied componentwise:
\[
\sin(Bt_{\mathrm{norm}})
=
\left(
\sin(B_0t_{\mathrm{norm}}),
\ldots,
\sin(B_{N_f-1}t_{\mathrm{norm}})
\right),
\]
\[
\cos(Bt_{\mathrm{norm}})
=
\left(
\cos(B_0t_{\mathrm{norm}}),
\ldots,
\cos(B_{N_f-1}t_{\mathrm{norm}})
\right).
\]
In the present experiments, we use
\[
B_k=2\pi k,\qquad k=0,\ldots,N_f-1,
\]
with $N_f=16$ Fourier frequency bands. These features help the network represent both slow subthreshold dynamics and the fast voltage changes associated with action potentials.

The injected dendritic current is standardized before being provided to the network:
\[
I_{\mathrm{norm}}(t)
=
\frac{I_{\mathrm{stim},d}(t)-\mu_I}{\sigma_I+\varepsilon},
\]
where $\mu_I$ and $\sigma_I$ are the empirical mean and standard deviation of the injected current over the training data, and $\varepsilon>0$ is a small constant used for numerical stability.

The neural network is an 8-layer multilayer perceptron with 256 hidden units per layer and SiLU activation functions. 
Strictly speaking, the MLP first outputs an unconstrained vector of raw variables,
\[
u(t;\phi)
=
N_\phi\left(
t_{\mathrm{norm}},
\gamma(t_{\mathrm{norm}}),
I_{\mathrm{norm}}(t),
e_r
\right),
\]
where each component of $u(t;\phi)$ takes values in $\mathbb{R}$. These raw outputs are then transformed componentwise into biophysically admissible state variables:
\[
\hat y(t;\phi)=\mathcal T(u(t;\phi)).
\]
The transformation $\mathcal T$ is chosen to enforce the natural ranges of the model variables before they are used in the ODE residual.

First, the somatic and dendritic voltages are mapped through a scaled hyperbolic tangent:
\[
\widehat{V}_k(t)
=
V_c
+
V_a
\tanh\left(
\frac{u_{V_k}(t)}{V_a}
\right),
\qquad k\in\{s,d\},
\]
where $u_{V_k}(t)$ is the raw network output associated with the voltage of compartment $k$. We use $V_c=-20$ mV and $V_a=80$ mV, so that
\[
\widehat{V}_k(t)\in (V_c-V_a,V_c+V_a)=(-100,60)\ \mathrm{mV}.
\]
This interval contains the relevant physiological range of the simulated membrane potentials.

Second, each gating variable is constrained to lie between 0 and 1 using a sigmoid transformation:
\[
\widehat{x}(t)
=
\operatorname{sigmoid}(u_x(t))
=
\frac{1}{1+e^{-u_x(t)}}
\in(0,1),
\]
where $u_x(t)$ is the raw output associated with the gate $x$. This is consistent with the interpretation of gating variables as probabilities or fractions of activated channel subunits.

Third, the calcium concentrations are kept positive and close to the resting concentration by using a softplus transformation:
\[
\widehat{[\mathrm{Ca}]}_i(t)
=
[\mathrm{Ca}]_{\mathrm{rest}}
+
\kappa\,\operatorname{softplus}(u_{\mathrm{Ca},i}(t)),
\qquad i\in\{1,2\},
\]
with
\[
\operatorname{softplus}(z)=\log(1+e^z).
\]
Here, $u_{\mathrm{Ca},i}(t)$ is the raw output associated with calcium pool $i$, and $\kappa=10^{-2}$ scales the calcium fluctuation around the resting value. This guarantees
\[
\widehat{[\mathrm{Ca}]}_i(t)>[\mathrm{Ca}]_{\mathrm{rest}},
\]
while keeping the predicted calcium concentration in a controlled range.

The selected somatic maximal conductances are also constrained to be positive. For each trainable conductance, we use the reparametrization
\[
\bar g_j
=
\operatorname{softplus}(\rho_j)
=
\log(1+e^{\rho_j}),
\qquad
j\in\{Na,DR,M,Ca\},
\]
where the unconstrained variables $\rho_j$ are optimized jointly with the network weights $\phi$. This guarantees positive conductance values throughout training.

The full two-compartment dynamics can be written compactly as
\[
\frac{dy}{dt}
=
f\left(y(t),I_{\mathrm{stim},d}(t);\theta_g\right),
\]
where $f$ denotes the vector field of the conductance-based model. More explicitly, $f$ collects the right-hand sides of the somatic and dendritic voltage equations, the gating equations, and the calcium dynamics:
\[
f(y,I_{\mathrm{stim},d};\theta_g)
=
\left(
f_{V_s},
f_{V_d},
f_{x_s},
f_{x_d},
f_{\mathrm{Ca},1},
f_{\mathrm{Ca},2}
\right).
\]
The voltage components $f_{V_s}$ and $f_{V_d}$ are given by the charge-balance equations in Eq.~(\ref{eq:voltage_dynamics}):
\[
f_{V_s}
=
\frac{1}{C_{m,s}}
\left[
-\bar g_{L,s}(V_s-E_L)
-I_{\mathrm{ion},s}(V_s,x_s;\theta_g)
-\frac{g_c}{p_s}(V_s-V_d)
+I_{\mathrm{stim},s}(t)
\right],
\]
and
\[
f_{V_d}
=
\frac{1}{C_{m,d}}
\left[
-\bar g_{L,d}(V_d-E_L)
-I_{\mathrm{ion},d}(V_d,x_d,[\mathrm{Ca}])
-\frac{g_c}{p_d}(V_d-V_s)
+I_{\mathrm{stim},d}(t)
\right].
\]
In the experiments considered here, stimulation is applied to the dendrite, so $I_{\mathrm{stim},s}(t)=0$. The somatic ionic current $I_{\mathrm{ion},s}$ depends on the trainable conductances $\theta_g$, whereas the dendritic ionic currents and the remaining model parameters are fixed according to Tables~~\ref{tab:currents},\ref{tab:parameters},\ref{tab:kinetics}.

For each gating variable $x$, the corresponding component of the vector field is given by the first-order Hodgkin--Huxley kinetics
\[
f_x(V_k,x)
=
\frac{x_\infty(V_k)-x}{\tau_x(V_k)},
\qquad k\in\{s,d\},
\]
where the functions $x_\infty$ and $\tau_x$ are those reported in Table~\ref{tab:kinetics}. Equivalently, these kinetics can be written in the $\alpha$--$\beta$ form described in Section~\ref{Gating kinetics}. The calcium components $f_{\mathrm{Ca},1}$ and $f_{\mathrm{Ca},2}$ are the right-hand sides of the calcium-pool equations in Eq.~(\ref{eq:ca_dynamics}).

Since $\hat y(t;\phi)$ is differentiable with respect to time, automatic differentiation is used to compute $d\hat y/dt$. The ODE residual is then defined as
\[
r_{\mathrm{ODE}}(t;\phi,\theta_g)
=
\frac{d\hat y(t;\phi)}{dt}
-
f\left(
\hat y(t;\phi),
I_{\mathrm{stim},d}(t);
\theta_g
\right).
\]
If the predicted trajectory exactly satisfies the two-compartment Hodgkin--Huxley model, then 
$r_{\mathrm{ODE}}(t;\phi,\theta_g)=0$ for all $t$. During training, this residual is penalized in the loss function so that the reconstructed trajectory remains consistent with the biophysical equations.
\subsection{Loss function and optimization}
\label{sec:loss_optimization}

The training objective combines dendritic data fidelity, sparse somatic supervision, physics consistency, and initial-condition regularization. For a stimulation protocol indexed by $r$, let $t_i^r$ denote the sampled time points and let $V_d^{\mathrm{obs},r}(t_i^r)$ be the observed dendritic voltage. The observed dendritic dataset is denoted by
\[
\mathcal D_d
=
\left\{
\left(t_i^r,I_{\mathrm{stim},d}^r(t_i^r),V_d^{\mathrm{obs},r}(t_i^r),r\right)
\right\}_{r,i}.
\]
When sparse somatic measurements are available, we also define a binary mask $m_i^r\in\{0,1\}$, where $m_i^r=1$ indicates that a somatic voltage sample is included in the training loss and $m_i^r=0$ otherwise. In the ablation of Section~\ref{sec:res_supervision} the fraction of somatic time points
included in the loss takes the values $0\%$ (i.e.\ $\lambda_s=0$), $1\%$ and $5\%$; the
remaining experiments all use the $5\%$ mask.

The total loss is written as
\[
\mathcal L_{\mathrm{total}}
=
\lambda_d\mathcal L_{V_d}
+
\lambda_s\mathcal L_{V_s}^{\mathrm{sparse}}
+
\lambda_{\mathrm{ph}}\mathcal L_{\mathrm{phys}}
+
\lambda_{\mathrm{IC}}\mathcal L_{\mathrm{IC}}.
\]
The dendritic data term enforces agreement between the predicted dendritic voltage and the observed dendritic trace:
\[
\mathcal L_{V_d}
=
\frac{1}{N_d}
\sum_{r,i}
\left(
\widehat V_d^r(t_i^r)
-
V_d^{\mathrm{obs},r}(t_i^r)
\right)^2,
\]
where $N_d$ is the total number of dendritic observation points across all protocols.

The sparse somatic term provides weak direct supervision of the hidden somatic voltage:
\[
\mathcal L_{V_s}^{\mathrm{sparse}}
=
\frac{1}{N_s}
\sum_{r,i}
m_i^r
\left(
\widehat V_s^r(t_i^r)
-
V_s^{\mathrm{sparse},r}(t_i^r)
\right)^2,
\qquad
N_s=\sum_{r,i}m_i^r.
\]
This term is not a dense somatic supervision term. It only anchors the hidden somatic trajectory
at a small number of scattered time points. Setting $\lambda_s=0$ recovers the strictly
dendrite-only inverse setting, which is the $0\%$ configuration of the ablation; in all other
experiments $\lambda_s>0$ is used to weakly regularize the reconstruction.

The physics loss penalizes the ODE residuals defined in Section~\ref{sec:inverse_problem}. Let
\[
r_{\mathrm{ODE}}(t;\phi,\theta_g)
=
\frac{d\widehat y(t;\phi)}{dt}
-
f\left(
\widehat y(t;\phi),
I_{\mathrm{stim},d}(t);
\theta_g
\right),
\]
and let $r_q(t;\phi,\theta_g)$ denote the component of this residual associated with state variable $q$. The set $Q$ contains all state components used in the physics residual, including voltage, gating, and calcium variables. We use a Huber penalty to reduce the influence of large residuals near spike onsets:
\[
\mathcal L_{\mathrm{phys}}
=
\sum_{q\in Q}
w_q^{(\mathrm{step})}
\frac{1}{N_{\mathrm{ph}}}
\sum_{r,i}
\rho_\delta
\left(
r_q^r(t_i^r;\phi,\theta_g)
\right),
\]
where $N_{\mathrm{ph}}$ is the number of residual evaluation points and
\[
\rho_\delta(z)
=
\begin{cases}
\dfrac{z^2}{2\delta}, & |z|\leq \delta,\\[6pt]
|z|-\dfrac{\delta}{2}, & |z|>\delta.
\end{cases}
\]
In all experiments, we use $\delta=25$. The physics weights are gradually increased during training in order to avoid stiff ODE residuals dominating the early optimization stage:
\[
w_q^{(\mathrm{step})}
=
\alpha(\mathrm{step})w_q^{(0)},
\qquad
\alpha(\mathrm{step})
=
\min\left(1,\frac{\mathrm{step}}{N_{\mathrm{warmup}}}\right),
\]
with $N_{\mathrm{warmup}}=10{,}000$. The global physics weight $\lambda_{\mathrm{ph}}$ can be absorbed into the component weights $w_q^{(0)}$; equivalently, the time-dependent weights $w_q^{(\mathrm{step})}$ control the relative contribution of the physics residuals during training.

The initial-condition loss stabilizes training and guides the predicted trajectory toward physiologically plausible initial states. It is computed from the first $n_{\mathrm{IC}}=20$ samples of each run and is written schematically as
\[
\mathcal L_{\mathrm{IC}}
=
w_{\mathrm{IC},V_d}\mathcal L_{\mathrm{IC},V_d}
+
w_{\mathrm{IC,gates}}\mathcal L_{\mathrm{IC,gates}}
+
w_{\mathrm{IC},V_s}\mathcal L_{\mathrm{IC},V_s}
+
w_{\mathrm{IC,soma\_g}}\mathcal L_{\mathrm{IC,soma\_g}}.
\]
The dendritic-voltage component anchors the initial dendritic voltage near the observed value, the gating-variable component biases the gates toward their voltage-dependent steady states, and the calcium component biases calcium concentrations toward their resting values. A weak prior on the initial somatic voltage is also used to improve stability when the soma is only sparsely observed.

The network weights $\phi$ and the unconstrained conductance variables $\rho_j$ are optimized jointly using the Adam optimizer. Unless otherwise stated, we use learning rate $10^{-4}$, mini-batches of $2048$ randomly sampled time points across all protocols, and $30{,}000$ optimization steps. The main loss weights are fixed to $\lambda_d=1$ for the dendritic voltage term, $\lambda_s=1.5$ for the sparse somatic term, and $\lambda_{\mathrm{IC}}=0.2$ for the initial-condition regularization. The component weights used in the initial-condition loss are
\[
w_{\mathrm{IC},V_d}=1,\qquad
w_{\mathrm{IC,gates}}=0.1,\qquad
w_{\mathrm{IC},V_s}=1,\qquad
w_{\mathrm{IC,soma\_g}}=1
\]
\subsection{Synthetic data and robustness protocols}
\label{sec:synthetic_data}

Synthetic datasets were generated by forward simulation of the full two-compartment model described in Section~\ref{sec:biophysical_model}. The simulations were performed using MATLAB's stiff ODE solver \texttt{ode15s} over the interval $t\in[0,500]$ ms. In all inverse experiments, stimulation was applied to the dendritic compartment only, so that
\[
I_{\mathrm{stim},s}(t)=0.
\]
The dendritic stimulation current $I_{\mathrm{stim},d}(t)$ was treated as a known input and was not estimated by the PINN.

We used four dendritic stimulation protocols: constant step, ramp, rectified sinusoidal input, and pulse train. These protocols were chosen to excite complementary dynamical regimes of the two-compartment neuron model. The constant input probes the response to a sustained depolarizing drive, the ramp input probes gradual recruitment of firing, the sinusoidal input probes oscillatory forcing, and the pulse train probes repeated transient responses.

Let $t_{\mathrm{on}}$ and $t_{\mathrm{off}}$ denote the onset and offset times of the stimulation window. In the simulations reported below, the stimulation window is $[t_{\mathrm{on}},t_{\mathrm{off}}]=[100,400]$ ms. The four injected-current protocols are defined as follows.

The constant step protocol is
\[
I_{\mathrm{stim},d}^{\mathrm{const}}(t)
=
\begin{cases}
I_{\mathrm{amp}}, & t\in[t_{\mathrm{on}},t_{\mathrm{off}}],\\
0, & \text{otherwise}.
\end{cases}
\]

The ramp protocol is
\[
I_{\mathrm{stim},d}^{\mathrm{ramp}}(t)
=
\begin{cases}
I_{\mathrm{amp}}
\dfrac{t-t_{\mathrm{on}}}{t_{\mathrm{off}}-t_{\mathrm{on}}},
& t\in[t_{\mathrm{on}},t_{\mathrm{off}}],\\[8pt]
0, & \text{otherwise}.
\end{cases}
\]
This input increases linearly from $0$ to $I_{\mathrm{amp}}$ during the stimulation window.

The rectified sinusoidal protocol is
\[
I_{\mathrm{stim},d}^{\mathrm{sin}}(t)
=
\begin{cases}
\dfrac{I_{\mathrm{amp}}}{2}
\left[
1+
\sin\left(
2\pi f
\dfrac{t-t_{\mathrm{on}}}{T_{\mathrm{ref}}}
\right)
\right],
& t\in[t_{\mathrm{on}},t_{\mathrm{off}}],\\[8pt]
0, & \text{otherwise},
\end{cases}
\]
where $f$ controls the number of oscillatory cycles and $T_{\mathrm{ref}}$ is a reference time scale. In the generated dataset, $T_{\mathrm{ref}}=500$ ms was used. The factor $1/2$ and the shift by $1$ ensure that the current remains non-negative.

The pulse-train protocol is
\[
I_{\mathrm{stim},d}^{\mathrm{pulse}}(t)
=
I_{\mathrm{amp}}
\sum_{\ell=1}^{N_p}
\mathbf 1_{[t_\ell,t_\ell+\tau_{\mathrm{pulse}}]}(t),
\]
where $\mathbf 1_A(t)$ is the indicator function of the interval $A$, $N_p$ is the number of pulses, $\tau_{\mathrm{pulse}}$ is the pulse duration, and
\[
t_\ell=t_{\mathrm{start}}+(\ell-1)\Delta t
\]
is the onset time of pulse $\ell$. Here, $\Delta t$ denotes the inter-pulse interval.

The amplitudes used in the four protocols were chosen from preliminary forward simulations so that the weakly coupled model generated informative somatic and dendritic dynamics without saturating the voltage response. In the clean benchmark shown in Figure~\ref{fig:clean_protocols}, the amplitudes are $I_{\mathrm{amp}}=15$ for the constant input, $I_{\mathrm{amp}}=20$ for the ramp input, $I_{\mathrm{amp}}=7$ for the sinusoidal input, and $I_{\mathrm{amp}}=10$ for the pulse train. The resulting trajectories provide the reference somatic and dendritic voltages used for training and evaluation.

Figure~\ref{fig:clean_protocols} displays the four stimulation protocols together with the corresponding somatic voltage $V_s(t)$, dendritic voltage $V_d(t)$, and injected dendritic current $I_{\mathrm{stim},d}(t)$. The full somatic voltage trajectory is used only for evaluation, except for the small subset of sparse somatic samples included in the training loss.

To evaluate robustness to measurement noise, additional synthetic datasets were generated by adding Gaussian noise to the voltage observations:
\[
V_k^{\mathrm{obs},\sigma}(t_i)
=
V_k^{\mathrm{clean}}(t_i)
+
\sigma \xi_{k,i},
\qquad
\xi_{k,i}\sim\mathcal N(0,1),
\]
where $k\in\{s,d\}$ denotes the somatic or dendritic compartment and $\sigma$ is the noise standard deviation in mV. We considered $\sigma=0.5$, $1$, and $2$ mV, in addition to the clean case $\sigma=0$. For each noise level, the same four dendritic stimulation protocols were used. The architecture, loss weights, optimizer, batch size, number of training iterations, and sparse somatic sampling mask were kept identical across noise conditions. Sparse somatic samples were extracted from the noisy somatic trace using the same mask as in the clean-data experiment, while the full clean somatic trace was reserved as the reference trajectory for evaluation.

To assess sensitivity to stochastic initialization, the clean-data experiment was repeated over eight independent random seeds. For each run, the network was trained from scratch on the same clean dataset, using the same stimulation protocols and the same sparse somatic sampling mask. All methodological settings were kept identical; only the random seed controlling neural-network initialization and stochastic mini-batch sampling was changed. Each trained model was then evaluated on the full clean dataset, including the complete somatic voltage trajectory reserved for evaluation. Across the eight runs, we report the mean, standard deviation, min--max range, coefficient of variation, and relative error for the reconstructed voltages, training loss, and estimated conductances.
\begin{figure}[h!]
  \centering
  \includegraphics[width=0.8\linewidth]{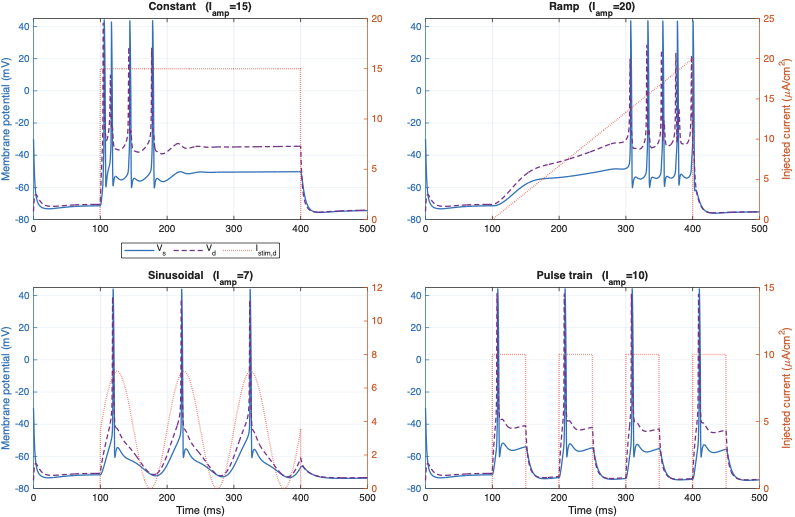}
 \caption{Forward simulations of the coupled soma--dendrite voltage system~\eqref{eq:voltage_dynamics} 
under four dendritic stimulation protocols: constant step, ramp, rectified sinusoidal input, and pulse train. 
The solid blue curve represents the somatic voltage $V_s(t)$, the dashed purple curve represents the 
dendritic voltage $V_d(t)$, and the red dotted curve represents the injected dendritic current 
$I_{\mathrm{stim},d}(t)$. 
Voltages are plotted on the left axis in mV, and the injected current on the right axis in 
$\mu\mathrm{A}/\mathrm{cm}^2$.}
\label{fig:clean_protocols}
\end{figure}

\section{Results}
\label{sec:results}

The numerical experiments were designed to test whether a PINN can recover the somatic output of a weakly coupled two-compartment neuron from dense dendritic measurements complemented by a small fraction of somatic samples. We focused on the weak-coupling regime $g_c=0.1~\mathrm{mS/cm^2}$, because in this regime the soma and dendrite do not behave as a single equipotential unit. As shown by the coupling analysis in Figure~\ref{fig:coupling_synchronization} and by the voltage-balance system~\eqref{eq:voltage_dynamics}, the axial current only weakly transfers fast somatic events to the dendrite. Consequently, the dendritic voltage $V_d(t)$ is a filtered and attenuated representation of the somatic voltage $V_s(t)$, rather than a direct copy of the neuronal output. This makes the reconstruction of $V_s(t)$ from dendritic observations a genuinely partially observed inverse problem.

All synthetic data used in the inverse experiments were generated by forward simulation of the full two-compartment Hodgkin--Huxley model described in Section~\ref{sec:biophysical_model}, using the ionic currents in Table~\ref{tab:currents}, the passive parameters and maximal conductances in Table~\ref{tab:parameters}, and the voltage-dependent gating kinetics in Table~\ref{tab:kinetics}. The forward simulations were performed in MATLAB, and the dendritic stimulation current $I_{\mathrm{stim},d}(t)$ was treated as a known input. Four dendritic stimulation protocols were used to enrich the training data and excite different dynamical regimes of the model: a constant step input, a ramp input, a rectified sinusoidal input, and a pulse-train input (Figure~\ref{fig:clean_protocols}). The objective was not to infer the stimulation current, but to reconstruct the hidden somatic trajectory and estimate a small set of somatic maximal conductances from the observed dendritic response.

The training setup was weakly supervised. The PINN was trained mainly from dense dendritic voltage observations and the known injected dendritic current, but a small number of scattered somatic voltage samples were also included in the loss function. The fraction of somatic time points included in the loss is the main experimental variable of this section. We first ask whether the somatic trajectory can be recovered at all without somatic samples ($0\%$, i.e.\ $\lambda_s=0$), and then quantify how the reconstruction and the conductance estimates improve when $1\%$ and $5\%$ of the somatic time points are added to the loss (Sections~\ref{sec:res_supervision} and~\ref{sec:res_conductances}). The $5\%$ configuration, which is the largest amount of somatic information considered in this work, is then kept unchanged for the comparison with the unscented Kalman filter (Section~\ref{sec:ukf}) and for the robustness experiments (Section~\ref{sec:res_robustness}). In all cases, the complete somatic voltage trace was reserved for evaluation. Therefore, the results should be interpreted as reconstruction from dendritic observations with sparse somatic guidance, rather than as a fully supervised somatic reconstruction. The use of Fourier features in the PINN was motivated by the time-series structure of the problem: the voltage traces contain both slow subthreshold components and fast spike transitions. Standard neural networks are known to learn low-frequency components more easily than high-frequency ones, a phenomenon often described as spectral bias~\cite{rahaman2019spectral,xu2019frequency}. Fourier feature embeddings help mitigate this limitation by enriching the time representation with oscillatory components~\cite{tancik2020}.

\subsection{Voltage reconstruction versus the amount of somatic supervision}

\label{sec:res_supervision}

Figure~\ref{fig:clean_protocols} shows the clean synthetic trajectories generated under the four dendritic stimulation protocols. In each panel, the solid blue curve represents the somatic voltage $V_s(t)$, the dashed purple curve represents the dendritic voltage $V_d(t)$, and the red dotted curve represents the injected dendritic current $I_{\mathrm{stim},d}(t)$. The figure illustrates the main difficulty of the inverse problem. The somatic voltage contains sharp action potentials, whereas the dendritic voltage is smoother and more attenuated, especially during spike events. This difference is a direct consequence of weak soma--dendrite coupling and of the fact that the two compartments have distinct ionic compositions and state variables.

The four protocols in Figure~\ref{fig:clean_protocols} probe complementary aspects of the dynamics. The constant input produces a short spike train followed by a depolarized plateau. The ramp input gradually recruits firing as the dendritic drive increases. The rectified sinusoidal input induces phase-dependent spike events and subthreshold oscillations. The pulse train produces repeated transient responses to abrupt current pulses. Using these different protocols in a single training framework exposes the PINN to both smooth and abrupt input-output regimes, which is important for learning a robust latent representation of the soma--dendrite dynamics.
\begin{figure}[htbp]
    \centering
    \includegraphics[width=0.7\textwidth]{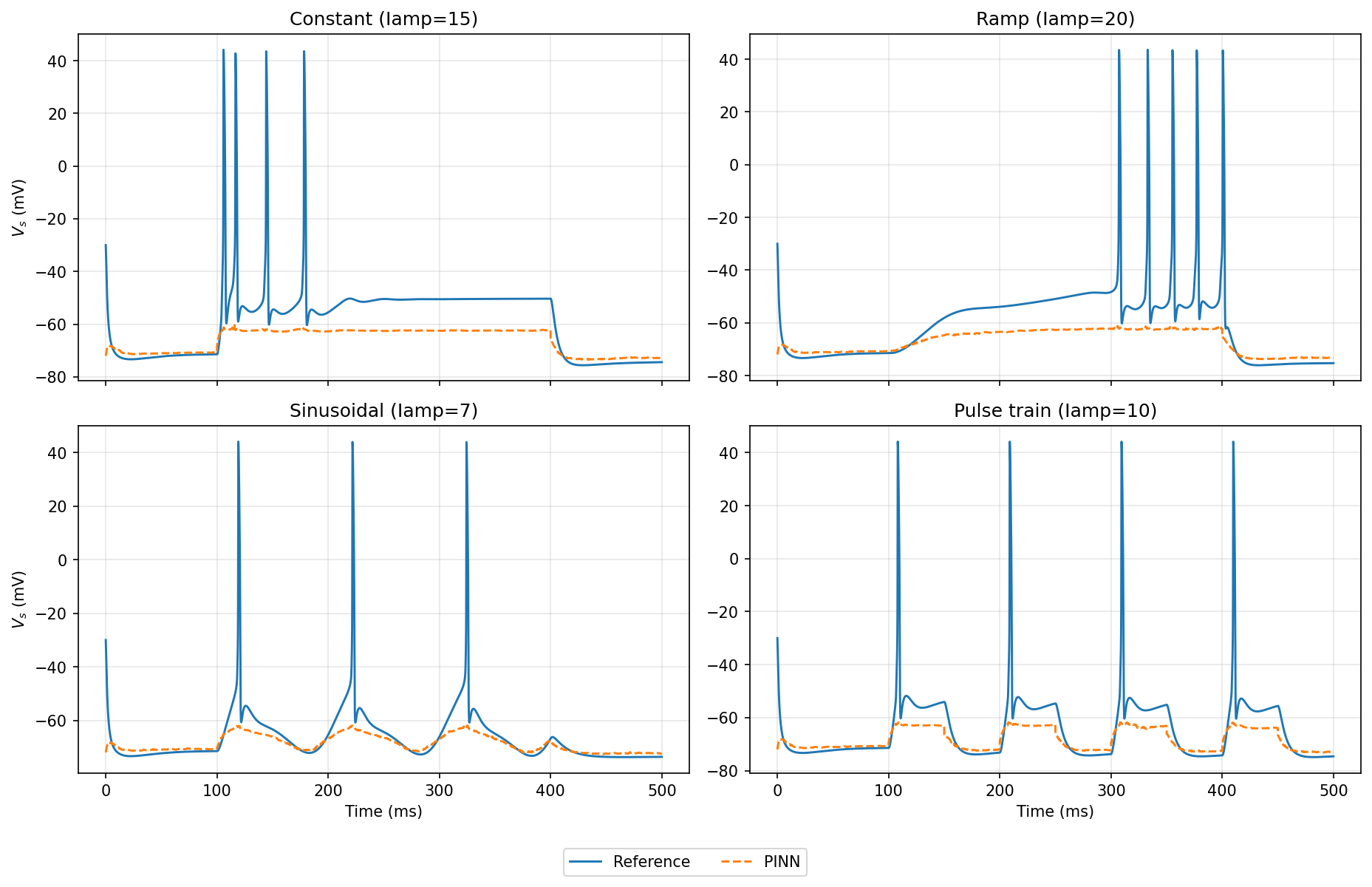}
    \caption{\textbf{Somatic voltage reconstruction with 0\% sparse somatic supervision (clean data).}
    Comparison of the ground-truth somatic membrane potential $V_s(t)$
    (blue, reference forward simulation) and the PINN reconstruction
    (orange dashed) for the four dendritic stimulation protocols over
    $t \in [0,500]$\,ms. The PINN is trained using dense dendritic
    observations only, with no somatic voltage samples included in the loss ($\lambda_s=0$).}
    \label{fig:somatic_reconstruction_0percent}
\end{figure}

\begin{figure}[htbp]
    \centering
    \includegraphics[width=0.7\textwidth]{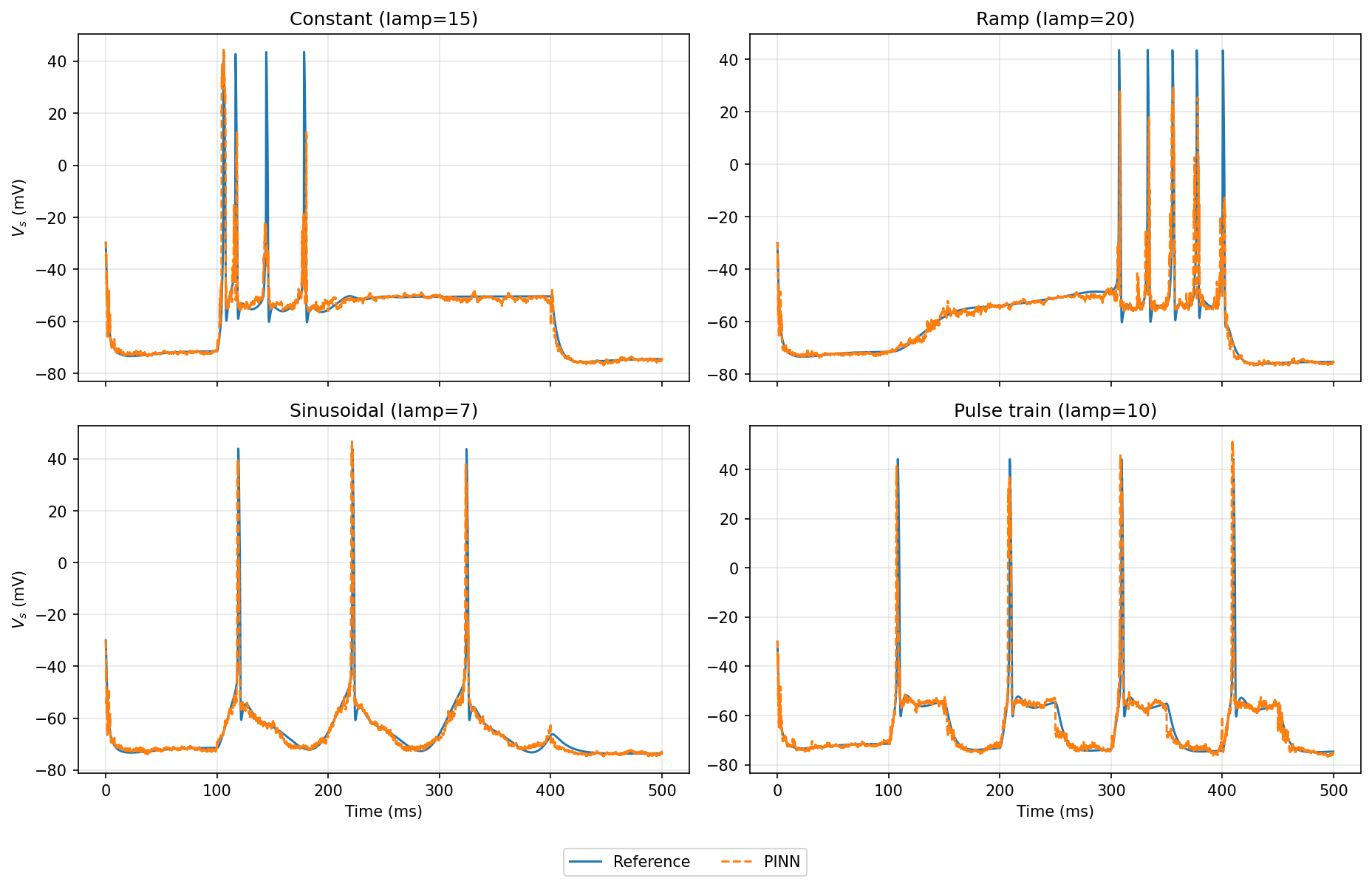}
    \caption{\textbf{Somatic voltage reconstruction with 1\% sparse somatic supervision (clean data).}
    Comparison of the ground-truth somatic membrane potential $V_s(t)$
    (blue, reference forward simulation) and the PINN reconstruction
    (orange dashed) for the four dendritic stimulation protocols over
    $t \in [0,500]$\,ms. The PINN is trained using dense dendritic
    observations together with only 1\% sparsely sampled somatic voltage measurements.}
    \label{fig:somatic_reconstruction_1percent}
\end{figure}

\paragraph{No somatic supervision ($0\%$).}
Figure~\ref{fig:somatic_reconstruction_0percent} shows the reconstruction obtained in the strictly dendrite-only setting, i.e.\ with $\lambda_s=0$ in the loss of Section~\ref{sec:loss_optimization}. In all four protocols the network returns a smooth, nearly flat somatic trajectory that hardly responds to the stimulation: the depolarized plateau of the constant protocol, the slow depolarization of the ramp and the subthreshold oscillations of the sinusoidal input are only coarsely reproduced, and none of the action potentials is recovered. The $0\%$ column of Table~\ref{tab:ablation_sparse_compact} makes this quantitative. The MAE lies between $3.78$~mV (sinusoidal) and $8.43$~mV (constant), the RMSE between $10.29$ and $14.28$~mV, and the maximum error is close to $106$~mV in every protocol, which is the amplitude of a missed spike. The ordering of the protocols is instructive: the largest errors are obtained for the constant and ramp inputs, whose somatic response contains a sustained depolarization, whereas the sinusoidal input, for which $V_s$ returns close to rest between the three spikes, gives the smallest MAE. At $0\%$ the error is therefore not only due to the missing spikes but also to an offset of the subthreshold level.

This behaviour can be understood from the structure of the loss in the weak-coupling regime. The only place where $V_s$ enters the dendritic current balance of Eq.~\eqref{eq:voltage_dynamics} is the axial term $(g_c/p_d)(V_d-V_s)$, which is small for $g_c=0.1~\mathrm{mS/cm^2}$. A non-spiking somatic trajectory therefore leaves only a modest residual in the dendritic equation, and this residual, weighted as in Section~\ref{sec:loss_optimization} (Huber penalty, progressive activation of the physics weights), is not sufficient to drive the network away from the smooth solution towards which it is biased by the spectral properties of neural networks~\cite{rahaman2019spectral,xu2019frequency}. This is consistent with the failure modes reported for physics-informed networks on problems in
which the differential-equation residual alone has to select a non-trivial solution, where the
optimizer converges instead to a smooth, nearly trivial
trajectory~\cite{krishnapriyan2021failure}. The optimizer thus settles on a solution in which the somatic equation is satisfied by a soma with reduced excitability (see the corresponding conductance estimates in Section~\ref{sec:res_conductances}), rather than on the spiking solution. The result characterizes the present formulation of the loss, not an absolute limit of the dendritic information, a point we return to in the Conclusion; within this formulation, the somatic samples are needed to remove the ambiguity.

\paragraph{One percent of somatic samples.}
With $1\%$ of the somatic time points included in the loss (Figure~\ref{fig:somatic_reconstruction_1percent}), the reconstruction changes qualitatively. All action potentials now appear at approximately the correct times, the plateau of the constant protocol and the subthreshold trajectories of the other protocols are followed, and the MAE drops to $1.92$--$2.70$~mV (Table~\ref{tab:ablation_sparse_compact}, $1\%$ column). The RMSE ($6.64$--$9.74$~mV) and the maximum error ($81.7$--$93.9$~mV) nevertheless remain large. The two metrics disagree because they weight the errors differently: the MAE is dominated by the many subthreshold points, which are now well reconstructed, whereas the RMSE and $E_{\max}$ are dominated by the few points located on the spikes, several of which are still reproduced with a reduced amplitude or a small temporal shift; the reconstruction also shows small oscillations in the subthreshold intervals. A convenient indicator is the ratio RMSE/MAE, which increases when the error is concentrated on a few points: it lies between $3.5$ and $4.1$ at $1\%$, against $2.0$--$2.3$ at $5\%$. A handful of somatic anchors is thus sufficient to select the spiking branch of the solution, but not to pin down the amplitude and the timing of every event.

\paragraph{Five percent of somatic samples.}

Figure~\ref{fig:pinn-clean} presents the main clean-data reconstruction result\rev{, obtained
with $5\%$ of the somatic time points}.

\begin{figure}[h!]
    \centering
    \includegraphics[width=0.7\textwidth]{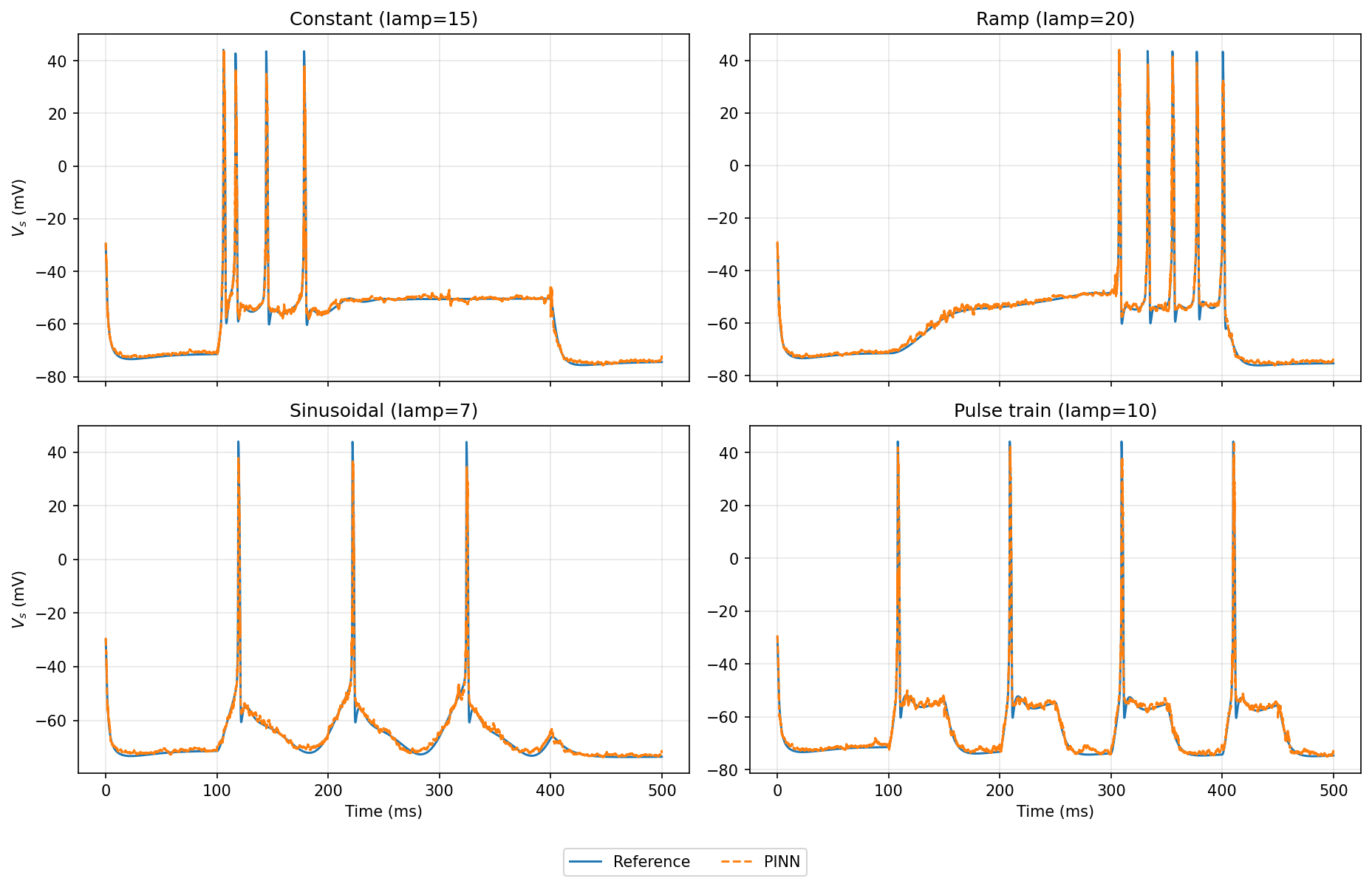}
     \caption{\textbf{Somatic voltage reconstruction with 5\% sparse somatic supervision
(clean data).} Comparison of the ground-truth somatic membrane potential $V_s(t)$
(blue, reference forward simulation) and the PINN reconstruction (orange dashed) for the four
dendritic stimulation protocols over $t \in [0, 500]$\,ms.}

\label{fig:pinn-clean}
\end{figure}
The blue curves show the reference somatic voltage obtained from the forward simulation, and the orange dashed curves show the PINN reconstruction. Across all four protocols, the reconstructed voltage closely follows the reference trajectory. For the constant input, the network recovers the spike timing, spike amplitude, and afterhyperpolarization phases following stimulation onset. For the ramp input, it captures both the slow depolarization before firing and the subsequent repetitive spike train. For the sinusoidal input, it reproduces the subthreshold oscillations and the spike events induced by the periodic drive. For the pulse-train input, it reconstructs the repeated spike responses triggered by the current pulses and the recovery phases between them.

This is the point of the experiment: the dense observed signal is dendritic, whereas the
reconstructed signal is the hidden somatic voltage. The close agreement in Figure~\ref{fig:pinn-clean} shows that the PINN can use the two-compartment biophysical constraints to propagate information from the observed dendrite to the weakly observed soma. The sparse somatic samples provide local anchor points for $V_s(t)$, but the full continuous trajectory, including most spike peaks and inter-spike intervals, still has to be inferred from the dendritic trace and the Hodgkin--Huxley equations.

The pointwise reconstruction error is quantified in Figure~\ref{fig:vs_abs_error}, which superimposes, for each protocol, the absolute error obtained with $0\%$, $1\%$ and $5\%$ somatic supervision.
For each stimulation protocol, we compute the absolute error between the reference somatic voltage and the PINN reconstruction:
\[
e_{\mathrm{abs}}(t)
=
\left|V_s^{\mathrm{ref}}(t)-V_s^{\mathrm{PINN}}(t)\right|.
\]
To summarize the reconstruction accuracy, we report three metrics. The mean absolute error is defined as
\[
\mathrm{MAE}
=
\frac{1}{N}
\sum_{i=1}^{N}
\left|V_s^{\mathrm{ref}}(t_i)-V_s^{\mathrm{PINN}}(t_i)\right|,
\]
the root-mean-square error is defined as
\[
\mathrm{RMSE}
=
\sqrt{
\frac{1}{N}
\sum_{i=1}^{N}
\left(
V_s^{\mathrm{ref}}(t_i)-V_s^{\mathrm{PINN}}(t_i)
\right)^2
},
\]
and the maximum absolute error is
\[
E_{\max}
=
\max_i
\left|V_s^{\mathrm{ref}}(t_i)-V_s^{\mathrm{PINN}}(t_i)\right|.
\]
Here, \(N\) denotes the number of evaluation time points for a given stimulation protocol.
\begin{figure}[h!]
    \centering
    \includegraphics[width=0.7\textwidth]{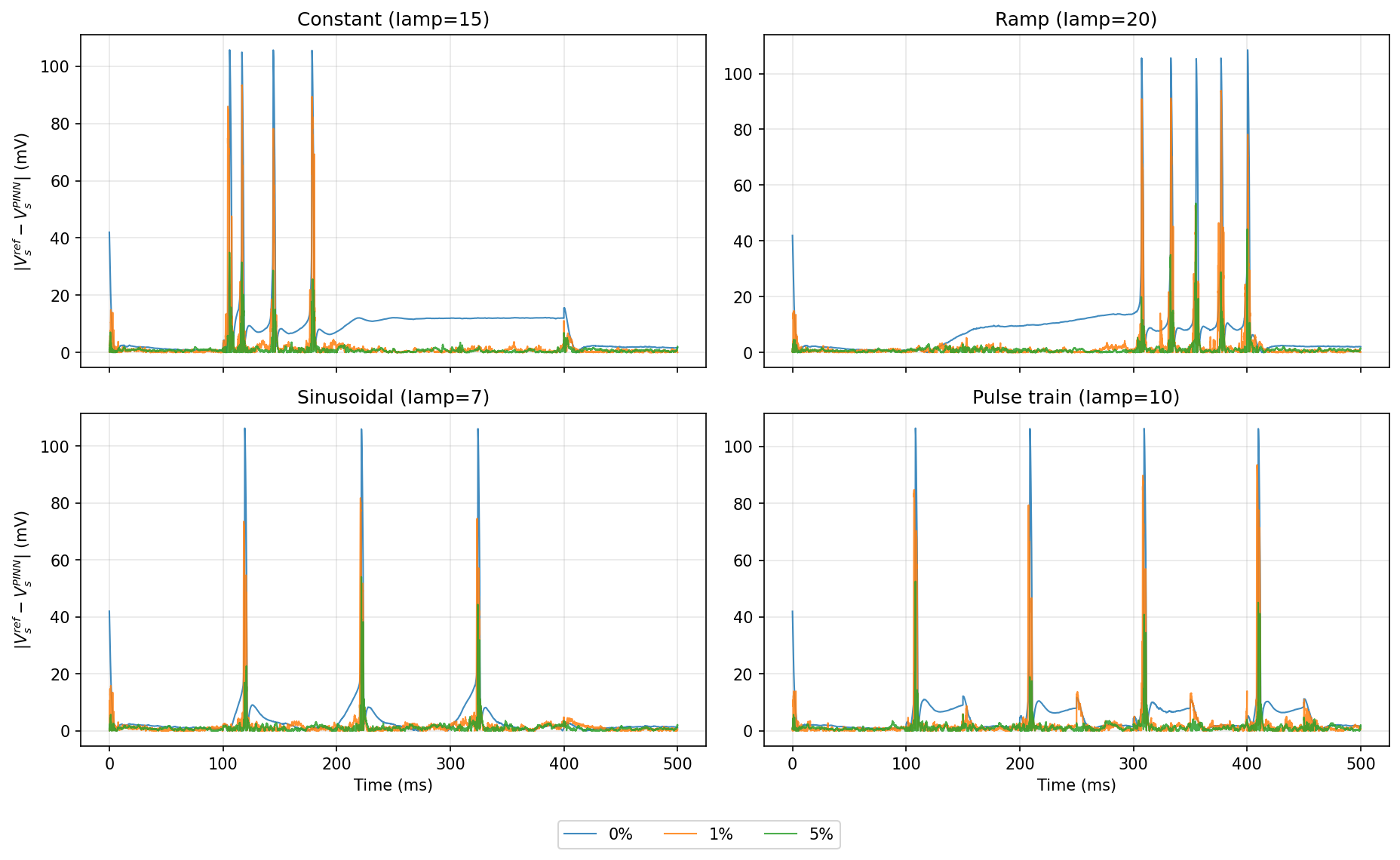}

   \caption{Pointwise somatic voltage reconstruction error $e_{\mathrm{abs}}(t)$ for the four stimulation protocols and for the three levels of somatic supervision ($0\%$, $1\%$ and $5\%$ of the somatic time points; same colour code in all panels).
Each panel shows the time evolution of the absolute difference between the reference somatic voltage and the PINN reconstruction.
The corresponding MAE, RMSE and maximum absolute errors are listed in Table~\ref{tab:ablation_sparse_compact}.}
\label{fig:vs_abs_error}
\end{figure}
The error remains low during most subthreshold intervals but increases around spike peaks and rapid voltage transitions. This is expected because a small temporal shift during the fast upstroke or repolarization of an action potential can produce a large instantaneous voltage difference. The protocol-wise errors confirm this interpretation.

With $5\%$ supervision (Table~\ref{tab:ablation_sparse_compact}, last three columns), the constant protocol gives $\mathrm{MAE}=0.90$~mV and $\mathrm{RMSE}=1.79$~mV; the ramp protocol gives $\mathrm{MAE}=1.03$~mV and $\mathrm{RMSE}=2.37$~mV; the sinusoidal protocol gives $\mathrm{MAE}=1.10$~mV and $\mathrm{RMSE}=2.23$~mV; and the pulse train, together with the ramp, is the most difficult case, with $\mathrm{MAE}=1.20$~mV and $\mathrm{RMSE}=2.37$~mV. The larger error for the pulse train reflects the difficulty of resolving abrupt repeated inputs and closely timed fast transitions. Overall, Figure~\ref{fig:vs_abs_error} confirms quantitatively the visual reconstruction quality observed in Figure~\ref{fig:pinn-clean}.
\begin{table}[htbp]
\centering
\caption{Somatic reconstruction error versus the fraction of somatic samples used in the loss
(identical hyperparameters and seed across all runs). Values in mV.}
\label{tab:ablation_sparse_compact}

\begin{tabular}{
l
S[table-format=1.2] S[table-format=2.2] S[table-format=3.1]
S[table-format=1.2] S[table-format=2.2] S[table-format=3.1]
S[table-format=1.2] S[table-format=2.2] S[table-format=3.1]
}
\toprule
& \multicolumn{3}{c}{$0\%$}
& \multicolumn{3}{c}{$1\%$}
& \multicolumn{3}{c}{$5\%$} \\
\cmidrule(lr){2-4}
\cmidrule(lr){5-7}
\cmidrule(lr){8-10}

Protocol
& {MAE} & {RMSE} & {$E_{\max}$}
& {MAE} & {RMSE} & {$E_{\max}$}
& {MAE} & {RMSE} & {$E_{\max}$} \\
\midrule

Constant
& 8.43 & 13.68 & 105.7
& 2.02 & 8.24 & 93.4
& 0.90 & 1.79 & 34.9 \\

Ramp
& 7.83 & 14.28 & 108.5
& 2.15 & 8.02 & 93.9
& 1.03 & 2.37 & 53.5 \\

Sinusoidal
& 3.78 & 10.29 & 106.2
& 1.92 & 6.64 & 81.7
& 1.10 & 2.23 & 54 \\

Pulse train
& 5.54 & 12.22 & 106.4
& 2.70 & 9.74 & 93.5
& 1.20 & 2.37 & 52.5 \\

\bottomrule
\end{tabular}
\end{table}

Table~\ref{tab:ablation_sparse_compact} gathers the three levels of supervision for the four protocols. Read column by column, the RMSE decreases by a factor between $4.6$ (sinusoidal) and $7.6$ (constant) from $0\%$ to $5\%$, and by a factor between $3.0$ and $4.6$ from $1\%$ to $5\%$; the maximum error falls from about $106$~mV to $35$--$54$~mV. The residual $E_{\max}$ at $5\%$ deserves a comment, because every spike is recovered at this level (see also Section~\ref{sec:ukf}). The upstroke of an action potential in this model has a slope of the order of a few hundred mV/ms, so that a shift of one or two tenths of a millisecond between the reconstructed and the reference spike produces an instantaneous error of several tens of mV without any event being missed. $E_{\max}$ is thus a measure of spike-timing precision rather than of spike detection, and the MAE and RMSE, which average over the whole window, are the relevant metrics for the quality of the trajectory. In this sense the transition from $0\%$ to $5\%$ is a change in the nature of the error: from a globally wrong trajectory (missing spikes and biased subthreshold level) to a correct trajectory with sub-millisecond timing errors concentrated on the spikes.

\subsection{Conductance identification }
\label{sec:res_conductances}
\begin{table}[ht]
\centering
\caption{Parameter estimation for different levels of somatic supervision.}
\label{tab:somatic_supervision_ablation}
\resizebox{\textwidth}{!}{%
\begin{tabular}{lc cc cc cc}
    \toprule
    & \textbf{Truth}
    & \multicolumn{2}{c}{\textbf{0\% supervision}}
    & \multicolumn{2}{c}{\textbf{1\% supervision}}
    & \multicolumn{2}{c}{\textbf{5\% supervision}} \\
    \cmidrule(lr){3-4}
    \cmidrule(lr){5-6}
    \cmidrule(lr){7-8}

    \textbf{Parameter}
    & \textbf{Value}
    & \textbf{Estimated} & \textbf{Rel. error}
    & \textbf{Estimated} & \textbf{Rel. error}
    & \textbf{Estimated} & \textbf{Rel. error} \\
    \midrule

    $\bar g_{\mathrm{Na},s}$
    & 120
    & 51.907 & 56.74\%
    & 119.908 & 0.08\%
    & 119.930 & 0.06\% \\

    $\bar g_{\mathrm{DR},s}$
    & 12
    & 7.149 & 40.42\%
    & 12.113 & 0.94\%
    & 11.990 & 0.08\% \\

    $\bar g_{\mathrm{M},s}$
    & 0.25
    & 0.005 & 97.81\%
    & 0.052 & 79.14\%
    & 0.206 & 17.77\% \\

    $\bar g_{\mathrm{Ca},s}$
    & 0.10
    & 0.672 & 572.48\%
    & 0.168 & 67.74\%
    & 0.114 & 14.11\% \\

    \bottomrule
\end{tabular}%
}
\end{table}
In addition to reconstructing the hidden somatic voltage, the PINN was used to estimate the selected somatic maximal conductances
\[
\theta_g=\{\bar g_{Na,s},\bar g_{DR,s},\bar g_{M,s},\bar g_{Ca,s}\}.
\]

The conductance identification results for the three supervision levels, including the clean-data case at $5\%$, are reported in Table~\ref{tab:somatic_supervision_ablation}.
For each estimated conductance, the relative error is computed as
\[
\mathrm{Rel.\ error}
=
\frac{
\left|\bar g_{\mathrm{true}}-\bar g_{\mathrm{est}}\right|
}{
\left|\bar g_{\mathrm{true}}\right|
}
\times 100\%.
\]
Here, $\bar g_{\mathrm{true}}$ denotes the reference conductance used to generate the synthetic data, and
$\bar g_{\mathrm{est}}$ denotes the value estimated by the PINN.

The three columns of Table~\ref{tab:somatic_supervision_ablation} follow the same progression as the voltage errors of Table~\ref{tab:ablation_sparse_compact}. Without somatic samples, the estimates are far from the reference values: $\bar g_{Na,s}$ and $\bar g_{DR,s}$ are underestimated by $57\%$ and $40\%$, $\bar g_{M,s}$ collapses to almost zero ($0.005~\mathrm{mS/cm^2}$) and $\bar g_{Ca,s}$ is overestimated by a factor of almost seven. These values are not arbitrary: they are the conductances of a low-excitability soma consistent with the smooth trajectory of Figure~\ref{fig:somatic_reconstruction_0percent}, in which the fast currents are reduced and the slow depolarizing balance is adjusted by the M-type and calcium terms. The parameters are co-adapted to a wrong trajectory, and the loss, as weighted in the present formulation, does not penalize this (trajectory, parameter) pair strongly enough to reject it. This is the parameter-side counterpart of the ambiguity discussed in Section~\ref{sec:res_supervision}. With $1\%$ of somatic samples, the two spike-generating conductances are already recovered to within $1\%$ ($0.08\%$ for $\bar g_{Na,s}$, $0.94\%$ for $\bar g_{DR,s}$), although the voltage RMSE is still $6$--$10$~mV at this level: as soon as the anchors force the network onto the spiking branch, the shape of the reconstructed spikes constrains $\bar g_{Na,s}$ and $\bar g_{DR,s}$ tightly, even if the timing of individual spikes is not yet exact. The slow conductances remain poorly determined at $1\%$ ($79\%$ error for $\bar g_{M,s}$, $68\%$ for $\bar g_{Ca,s}$) and improve only when the fraction is raised to $5\%$, where their errors fall to $18\%$ and $14\%$. The remainder of this section discusses the $5\%$ estimates in detail.

The clean-data estimates at $5\%$ supervision are reported in the last two columns of Table~\ref{tab:somatic_supervision_ablation}, which lists the true value, the estimated value, and the relative error for each conductance. The sodium and delayed-rectifier potassium conductances are recovered with very high accuracy. The estimate of $\bar g_{Na,s}$ is $119.93~\mathrm{mS/cm^2}$ for a true value of $120~\mathrm{mS/cm^2}$, corresponding to a relative error of $0.06\%$. The estimate of $\bar g_{DR,s}$ is $11.99~\mathrm{mS/cm^2}$ for a true value of $12~\mathrm{mS/cm^2}$, corresponding to a relative error of $0.08\%$.

These near-exact estimates are consistent with the voltage reconstruction in Figure~\ref{fig:pinn-clean} and the low RMSE values in Figure~\ref{fig:vs_abs_error}. Biophysically, this is expected because sodium and delayed-rectifier potassium currents dominate the main spike waveform: the sodium current controls the rapid depolarizing upstroke, whereas the delayed-rectifier potassium current contributes strongly to repolarization. This is the central mechanism of Hodgkin--Huxley spike generation~\cite{hodgkin1952quantitative}. Therefore, any model that accurately reconstructs spike timing and spike shape receives strong information about $\bar g_{Na,s}$ and $\bar g_{DR,s}$.

By contrast, Table~\ref{tab:somatic_supervision_ablation} shows that the M-type potassium and calcium conductances are less tightly identified. The M-type conductance is estimated as $\bar g_{M,s}=0.206~\mathrm{mS/cm^2}$ instead of $0.25~\mathrm{mS/cm^2}$, giving a relative error of $17.77\%$. The calcium conductance is estimated as $\bar g_{Ca,s}=0.114~\mathrm{mS/cm^2}$ instead of $0.10~\mathrm{mS/cm^2}$, giving a relative error of $14.11\%$. This does not indicate a failure of the reconstruction. Rather, it reveals an identifiability hierarchy. The fast sodium and delayed-rectifier potassium currents leave a direct and repeated signature on every spike, whereas the M-type and calcium currents contribute more indirectly through slower features such as adaptation, recovery, and subthreshold depolarization. These slower effects can be partially compensated by other variables or parameters, which is a common difficulty in conductance-based inverse problems~\cite{huys2006efficient,daly2015hodgkin}. Disparate sets of maximal conductances are indeed known to produce nearly indistinguishable
activity in conductance-based neuronal models~\cite{prinz2004similar}, which is the mechanism
underlying such compensations. In the terminology of identifiability analysis, $\bar
g_{\mathrm{M},s}$ and $\bar g_{\mathrm{Ca},s}$ are not structurally unidentifiable here, since
they enter the somatic current balance through distinct terms; they are practically
unidentifiable, in the sense that the amount and the quality of the available observations do
not constrain them tightly~\cite{raue2009identifiability}.

The parameters of the model explain why these two conductances are harder to identify than the fast ones, independently of the estimator used. First, they are small: $\bar g_{M,s}=0.25$ and $\bar g_{Ca,s}=0.1~\mathrm{mS/cm^2}$ are between roughly fifty and a thousand times smaller than $\bar g_{DR,s}$ and $\bar g_{Na,s}$, and the corresponding currents are of the order of the leak current or below. Second, they are slow: the M-type gate $p$ has a time constant of a few tens of milliseconds, and one of the two calcium gates has a time constant of $569$~ms (Table~\ref{tab:kinetics}), longer than the $500$~ms window itself, so that this gate barely departs from its initial value during a run. The signature of these currents in the somatic voltage is therefore a small, slowly varying contribution to the current balance of Eq.~\eqref{eq:voltage_dynamics}, which a few sparse somatic samples and a strongly filtered dendritic trace can only weakly constrain, and which can be partially absorbed by neighbouring terms of the same equation. The sensitivity of the loss to $\bar g_{M,s}$ and $\bar g_{Ca,s}$ is correspondingly low: the loss landscape is nearly flat along these two directions, which produces both the residual bias observed here and the seed-to-seed variability reported in Section~\ref{sec:res_robustness}.

The sparse somatic supervision is important for interpreting these conductance results. Since only a small fraction of somatic samples is included in the loss, the network receives limited direct information about the hidden somatic state. These sparse samples help stabilize the reconstruction and likely improve the partial identifiability of slower conductances compared with a strictly dendrite-only loss. However, they do not fully remove the ambiguity associated with $\bar g_{M,s}$ and $\bar g_{Ca,s}$. Table~\ref{tab:somatic_supervision_ablation} therefore shows both the strength and the limitation of the approach: the dominant spike-generating conductances are strongly identifiable, whereas slower and calcium-related conductances require richer stimulation protocols, longer recordings, or additional measurements. Since $5\%$ is the level at which both the trajectory and the conductances are recovered, this configuration is used, without any other change, in the comparison with the UKF and in the robustness study that follow.

\subsection{The PINN and the Unscented Kalman Filter (UKF)}
\label{sec:ukf}
Before comparing the two reconstructions, it is worth recalling what the two
estimators share. The filter is not an alternative model: its sigma points are
propagated through the same two-compartment vector field used in the physics
residual of Section~\ref{sec:inverse_problem}, with the same gating kinetics, the
same passive parameters and the same coupling conductance
$g_c = 0.1$\,mS/cm$^2$. It assimilates the same dendritic observations and the same $5\%$ somatic samples, over the
same $500$\,ms window, and it estimates the same four somatic maximal
conductances by augmenting the state vector. Any difference between the two
reconstructions therefore comes from the way the model constraint is imposed,
not from the biophysics itself.

Figure~\ref{fig:ukf_traces} shows the reconstructed somatic voltage over the full
window. Both methods detect every spike of the reference simulation, in all four
protocols, and neither produces spurious events, so the comparison is not about
spike detection but about the shape of the residual error. The UKF estimate is
smooth but shifted: during the depolarized plateau of the constant protocol it
remains about $4$--$5$\,mV below the reference, and the same negative offset
reappears in the interspike intervals of the sinusoidal and pulse-train
protocols. The PINN trajectory fluctuates around the reference with a small
ripple inherited from the Fourier features, but without any persistent offset.
The PINN error is thus mainly a variance term, whereas the UKF error contains a
systematic bias.

Figure~\ref{fig:ukf_zoom} shows that this bias also reaches the spike waveform.
The filtered estimate rises slightly too early, repolarizes faster than the
reference and overshoots the afterhyperpolarization by about $6$--$7$\,mV before
settling at a more hyperpolarized level, so the spike is visibly narrower. The
PINN reproduces the spike width and the repolarization phase, and its only
visible defects are a small underestimation of the peak amplitude and the
subthreshold ripple already mentioned.
\begin{figure}[htbp]
\centering
\includegraphics[width=0.7\textwidth]{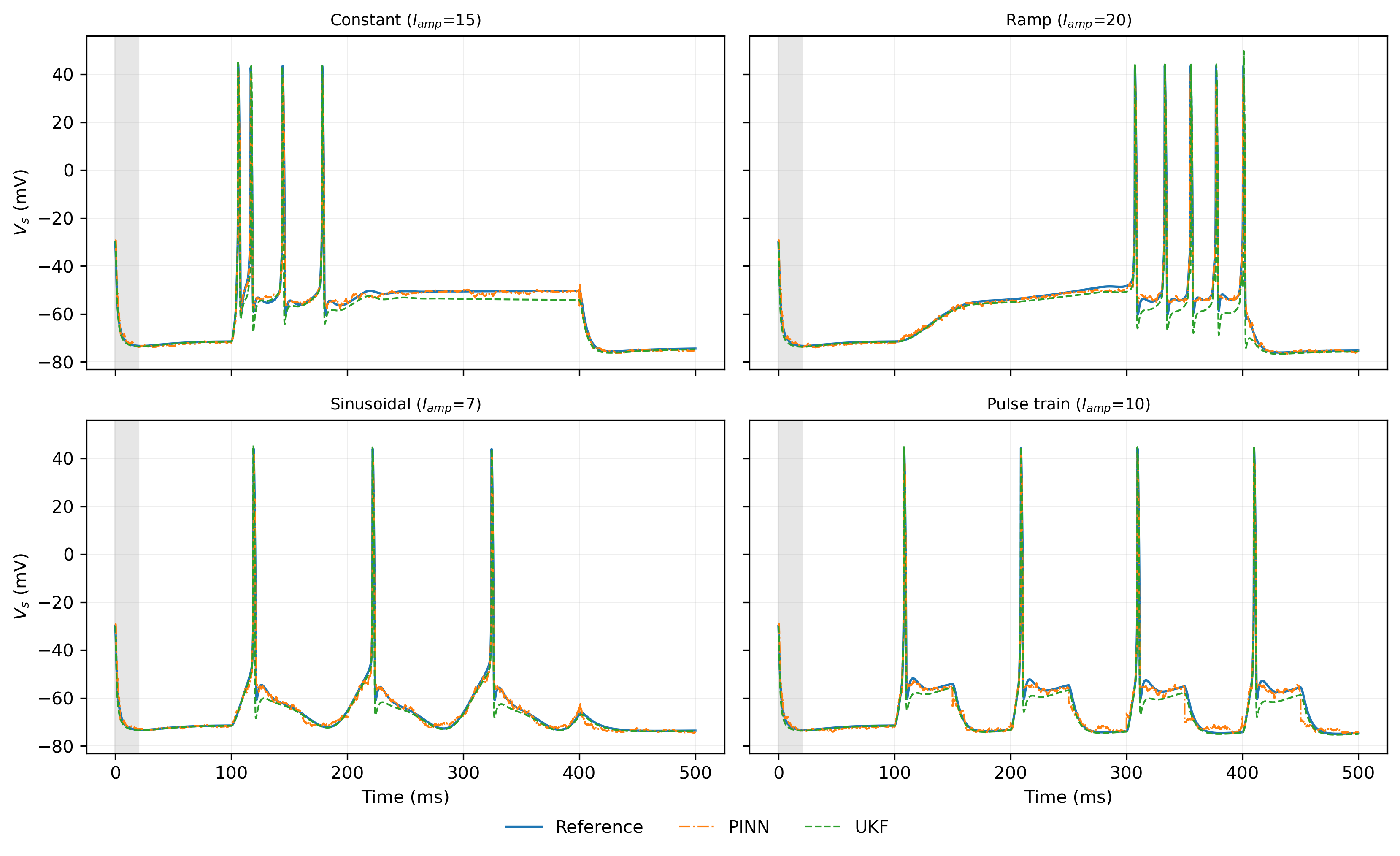}
\caption{Comparison of the PINN and UKF reconstructions of the somatic membrane
potential $V_s(t)$ with 5\% somatic supervision, for the four dendritic
stimulation protocols: reference simulation (solid blue), PINN (orange,
dash-dotted) and augmented-state UKF (green, dashed).}
\label{fig:ukf_traces}
\end{figure}

\begin{figure}[htbp]
\centering
\includegraphics[width=0.7\textwidth]{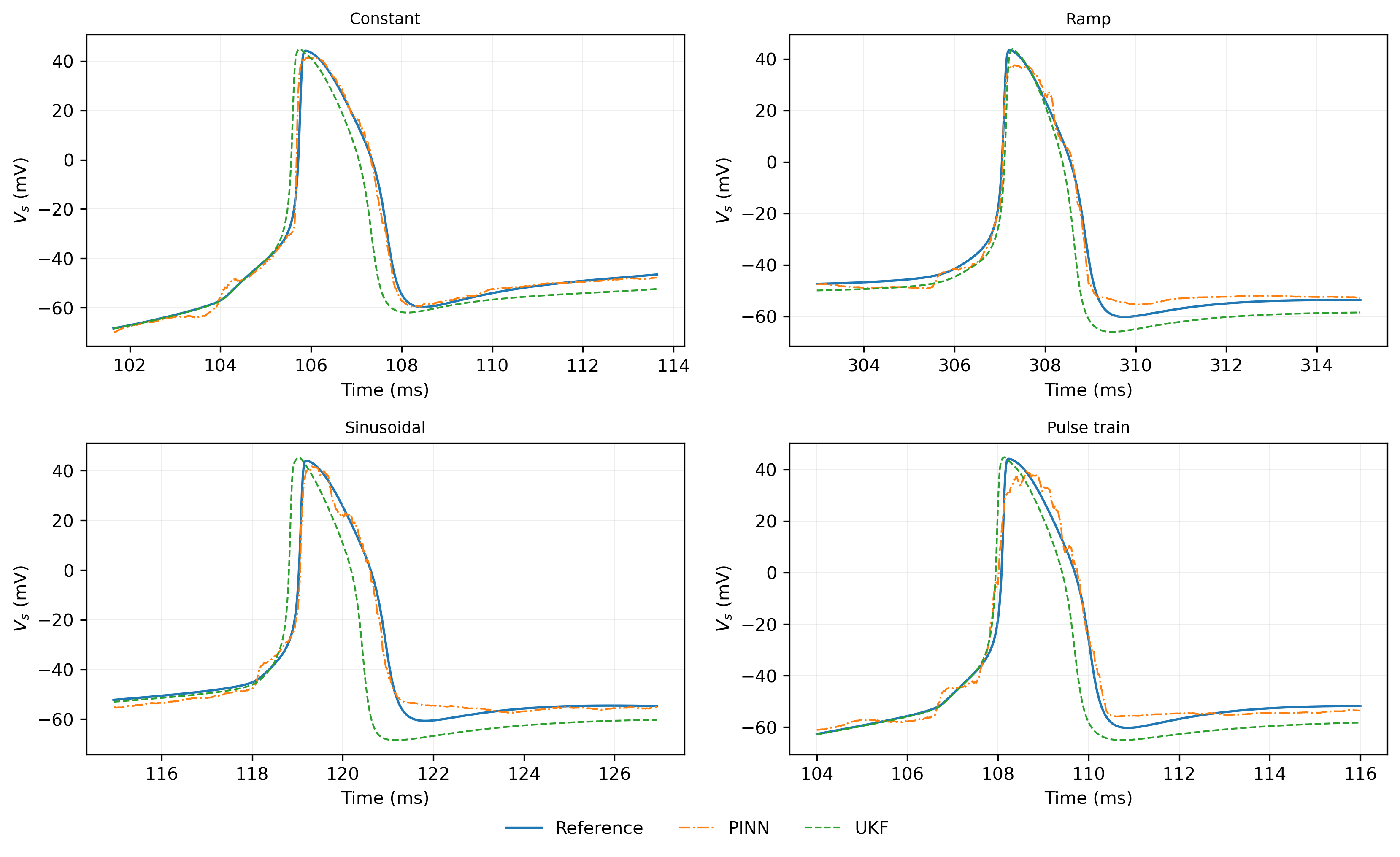}
\caption{Enlargement of Figure~\ref{fig:ukf_traces} on a single action potential,
one per stimulation protocol; same colour code. The UKF spike starts earlier,
repolarises faster and goes deeper into the afterhyperpolarisation, so it appears
narrower than the reference.  The PINN keeps the correct spike width and
repolarisation time course; it only underestimates the peak slightly and shows a
small ripple in the subthreshold parts.}
\label{fig:ukf_zoom}
\end{figure}

\begin{table}[htbp]
\centering
\caption{UKF reconstruction and parameter-estimation results using 5\% somatic data.}
\label{tab:ukf_results_5pct} 
\begin{subtable}[t]{0.52\textwidth}
\centering
\caption{Somatic reconstruction errors obtained with the UKF using 5\% somatic data.}
\label{tab:ukf_reconstruction_5pct}
\small
\setlength{\tabcolsep}{5pt}
\begin{tabular}{lrrr}
\toprule
Protocol & MAE & RMSE & $E_{\max}$ \\
\midrule
Constant   & 2.29 & 5.09 & 83.99 \\
Ramp       & 2.10 & 5.10 & 89.31 \\
Sinusoidal & 1.13 & 3.29 & 64.49 \\
Pulse train & 1.83 & 3.37 & 51.32 \\
\bottomrule
\end{tabular}
\end{subtable}
\hfill
\begin{subtable}[t]{0.45\textwidth}
\centering
\caption{Somatic maximal conductances estimated by the UKF using 5\% somatic data, in mS/cm$^2$.}
\label{tab:cond_ukf_5pct}
\small
\setlength{\tabcolsep}{4pt}
\begin{tabular}{lrrr}
\toprule
Parameter & Truth & Est. & Rel. err. (\%) \\
\midrule
$\bar g_{\mathrm{Na},s}$ & 120  & 161.61 & 34.67  \\
$\bar g_{\mathrm{DR},s}$ & 12   & 17.57  & 46.43  \\
$\bar g_{\mathrm{M},s}$  & 0.25 & 0.574  & 129.75 \\
$\bar g_{\mathrm{Ca},s}$ & 0.10 & 0.0224 & 77.63  \\
\bottomrule
\end{tabular}
\end{subtable}
\end{table}

Table~\ref{tab:ukf_results_5pct} summarizes the UKF results obtained using 5\%
somatic data. The reconstruction error depends on the stimulation protocol. The
lowest RMSE is obtained for the sinusoidal input ($3.29$\,mV), followed by the
pulse train ($3.37$\,mV), whereas the constant and ramp protocols yield RMSE
values of $5.09$ and $5.10$\,mV, respectively. The maximum absolute error ranges
from $51.32$ to $89.31$\,mV, showing that the largest discrepancies occur near
the sharpest voltage variations.

The conductance estimates reported in Table~\ref{tab:cond_ukf_5pct} show relative
errors of $34.67\%$ for $\bar g_{\mathrm{Na},s}$, $46.43\%$ for
$\bar g_{\mathrm{DR},s}$, $129.75\%$ for $\bar g_{\mathrm{M},s}$, and $77.63\%$
for $\bar g_{\mathrm{Ca},s}$. Thus, under this observation setting, the UKF
provides an approximate reconstruction of the somatic trajectory, while the
selected maximal conductances remain only weakly identified.

Placing the two estimators side by side under the same $5\%$ observation setting gives the following picture. For the somatic trajectory, the RMSE of the PINN (Table~\ref{tab:ablation_sparse_compact}, $5\%$ column) is lower than that of the UKF (Table~\ref{tab:ukf_reconstruction_5pct}) in every protocol: by a factor of $2.8$ for the constant input ($1.79$ against $5.09$~mV), $2.2$ for the ramp ($2.37$ against $5.10$~mV), $1.5$ for the sinusoidal input ($2.23$ against $3.29$~mV) and $1.4$ for the pulse train ($2.37$ against $3.37$~mV). The MAE is likewise smaller for the PINN in all four protocols, although only marginally for the
sinusoidal input ($1.10$ against $1.13$~mV), and the maximum error of the PINN is smaller in
three protocols out of four ($34.9$--$54.0$~mV against $64.5$--$89.3$~mV), the pulse train being
the only case where the two methods are equivalent on this metric ($52.5$ against $51.3$~mV). The gap is wider for the conductances: the PINN recovers $\bar g_{Na,s}$ and $\bar g_{DR,s}$ to within $0.1\%$ where the UKF errs by $35\%$ and $46\%$, and its errors on $\bar g_{M,s}$ and $\bar g_{Ca,s}$ ($18\%$ and $14\%$) are five to seven times smaller than those of the filter ($130\%$ and $78\%$).

The two results are related. The UKF estimates $\bar g_{Na,s}$, $\bar g_{DR,s}$ and $\bar g_{M,s}$ above their true values and $\bar g_{Ca,s}$ below it, i.e.\ it describes a soma with a stronger fast inward current, a stronger repolarizing current, a stronger slow potassium current and a weaker slow depolarizing current. This is precisely the waveform seen in Figures~\ref{fig:ukf_traces} and~\ref{fig:ukf_zoom}: an earlier rise, a faster repolarization, a deeper afterhyperpolarization and an interspike level several millivolts too low. The systematic voltage offset of the filter and its biased conductances are two expressions of the same estimate. The origin of this bias lies in the sequential nature of the filter in a weakly observed problem. Between two somatic samples, which are separated on average by twenty time steps at $5\%$, the filter can only rely on the dendritic observation, which carries little information about $V_s$ under weak coupling, and on the model prediction; the augmented parameters, treated as constant states, are then adjusted locally to reduce the innovations, and the Gaussian approximation on which the unscented transform rests is stretched by the strongly nonlinear spike upstroke. Each somatic sample corrects the state but only partially the parameters, and the estimate settles on a biased set that is locally consistent with the data. The PINN, in contrast, imposes the same equations at all collocation points simultaneously, so that every spike of every protocol contributes to the same four global parameters and the reconstructed trajectory has to be consistent with the whole window at once. A fixed-interval smoother would recover part of this global information and could reduce the offset; the comparison reported here is with the filter, which is the standard sequential estimator, and it isolates the effect of imposing the model globally rather than sequentially.

\subsection{Robustness to noise and random initialization}
\label{sec:res_robustness}

The two experiments of this section use the $5\%$ configuration of Section~\ref{sec:res_supervision}, with the same architecture, loss weights, optimizer and somatic sampling mask; only the observation noise or the random seed is changed. The robustness of the method to measurement noise is summarized in Table~\ref{tab:noise_robustness}. Gaussian noise with standard deviation $\sigma=0.5$, $1$, and $2$ mV was added to the voltage observations, in addition to the clean case $\sigma=0$. As expected, the somatic reconstruction error increases with the noise level: $\mathrm{RMSE}(V_s)$ rises from $1.6902$ mV in the clean case to $1.8699$ mV for $\sigma=0.5$ mV, $1.9957$ mV for $\sigma=1$ mV, and $2.5936$ mV for $\sigma=2$ mV. This increase shows that noise degrades the reconstruction quantitatively, but the error remains moderate relative to the amplitude of action potentials.

Two features of this progression are worth noting. First, the total increase of the RMSE between $\sigma=0$ and $\sigma=2$~mV (about $0.9$~mV) is smaller than the noise standard deviation itself, so the inversion does not amplify the measurement noise, which is not guaranteed for a partially observed inverse problem. Second, since the reconstruction plotted in Figure~\ref{fig:noise_sigma2} remains smooth in the subthreshold intervals, the additional error is most likely concentrated near the spikes, where the noisy dendritic trace and the noisy somatic anchors perturb the timing of the reconstructed events by a fraction of a millisecond, with the large instantaneous errors discussed in Section~\ref{sec:res_supervision}.

The conductance estimates in Table~\ref{tab:noise_robustness} show that the spike-generating conductances remain highly robust under noisy observations. Across all noise levels, the relative error for $\bar g_{Na,s}$ remains between $0.06\%$ and $0.11\%$, while the error for $\bar g_{DR,s}$ remains between $0.02\%$ and $0.16\%$. Even for $\sigma=2$ mV, both conductances are recovered with errors below $0.2\%$. This confirms that the dominant fast conductances are strongly constrained by the spike dynamics and by the physics-informed residuals.

The slower conductances show a different behavior in Table~\ref{tab:noise_robustness}. The M-type conductance error remains around $17.6$--$17.8\%$ for $\sigma=0$, $0.5$, and $1$ mV, then decreases to $8.25\%$ for $\sigma=2$ mV. This non-monotonic behavior suggests that the M-type estimate is affected not only by noise amplitude, but also by optimization variability and parameter compensation. The calcium conductance error increases from $14.11\%$ in the clean case to $25.12\%$ at $\sigma=1$ mV and remains elevated at $21.97\%$ for $\sigma=2$ mV. Thus, the noisy-data results confirm that calcium-related mechanisms are less robustly identifiable than the dominant spike-generating currents. Read together with the seed experiment below, these fluctuations point to a single explanation: along the $\bar g_{M,s}$ and $\bar g_{Ca,s}$ directions the loss is nearly flat, so that any perturbation of the training problem, whether measurement noise or a different initialization, displaces the estimate along this flat direction without a systematic trend. The lower M-type error at $\sigma=2$~mV should therefore not be read as an improvement brought by the noise, but as one more sample of this variability.
\begin{figure}[h!]
\centering
\includegraphics[width=0.8\textwidth]{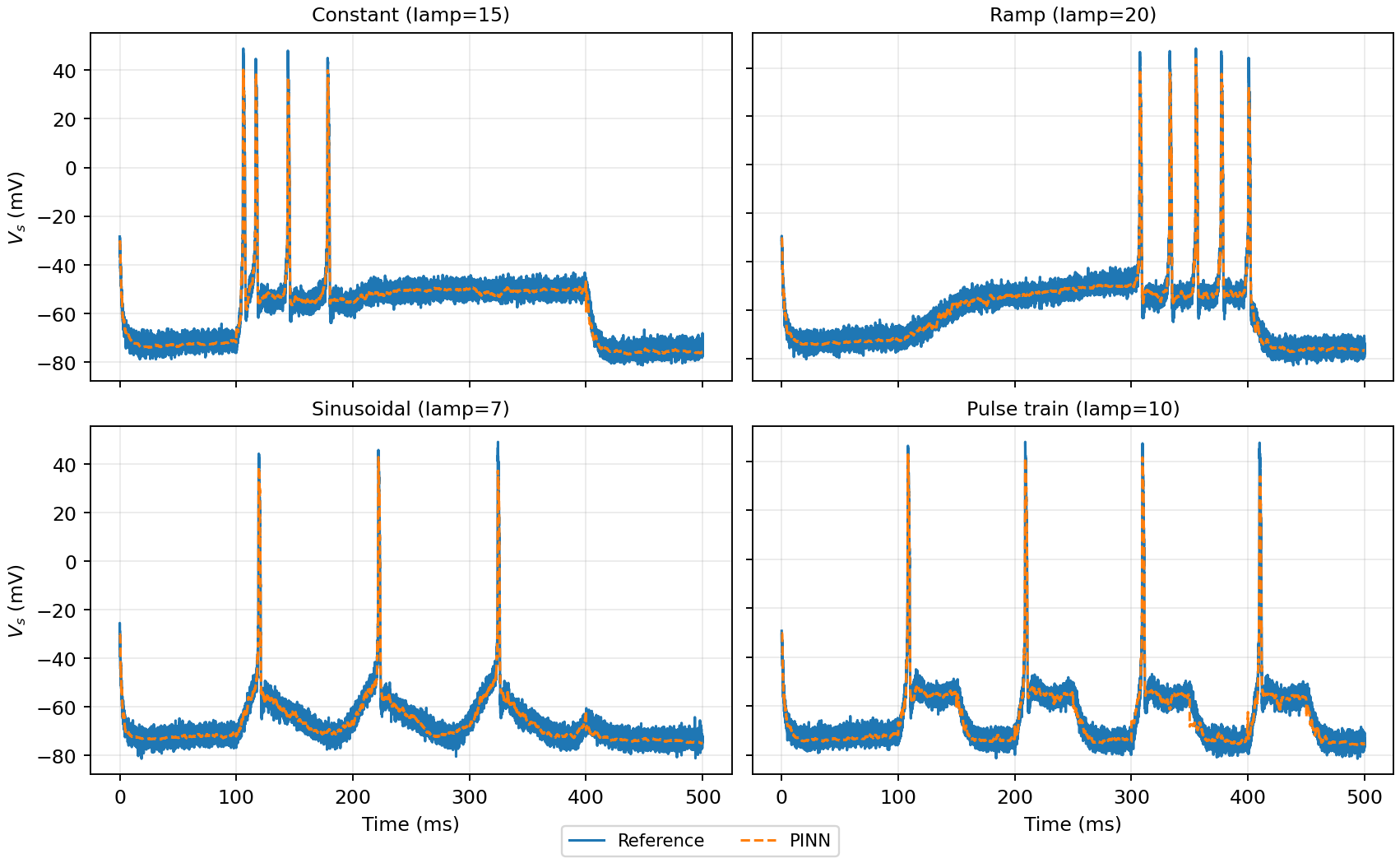}
\caption{
Somatic voltage reconstruction under strong Gaussian measurement noise
\((\sigma=2~\mathrm{mV})\). The PINN reconstruction is compared with the
reference somatic voltage for the four dendritic stimulation protocols:
constant step, ramp, rectified sinusoidal input, and pulse train. Despite the
noisy observations, the PINN preserves the main subthreshold dynamics and
spike timing, illustrating the regularizing effect of the physics-informed
constraints.
}
\label{fig:noise_sigma2}
\end{figure}

\begin{table}[h]
\centering
\caption{Robustness of the PINN to Gaussian voltage noise.
The noise level \(\sigma\) is the standard deviation of the additive Gaussian
noise applied to the voltage observations. RMSE values are reported in mV.
Relative errors are computed with respect to the true somatic conductances.}
\label{tab:noise_robustness}
\begin{tabular}{ccccccc}
\toprule
\(\sigma\)

& RMSE\((V_s)\)
& Err. \(\bar g_{\mathrm{Na},s}\)
& Err. \(\bar g_{\mathrm{DR},s}\)
& Err. \(\bar g_{\mathrm{M},s}\)
& Err. \(\bar g_{\mathrm{Ca},s}\) \\
(mV)  & (mV) & (\%) & (\%) & (\%) & (\%) \\
\midrule
0  & 1.6902 & 0.06 & 0.08 & 17.77 & 14.11 \\
0.5  & 1.8699 & 0.06 & 0.02 & 17.62 & 15.79 \\
1 & 1.9957 & 0.06 & 0.10 & 17.77 & 25.12 \\
2 & 2.5936 & 0.11 & 0.16 & 8.25  & 21.97 \\
\bottomrule
\end{tabular}
\end{table}
Figure~\ref{fig:noise_sigma2} provides a visual example of the reconstruction under strong noise, with $\sigma=2$ mV. The reference voltage contains visible noisy fluctuations, whereas the PINN reconstruction remains smooth and preserves the main spike timing, subthreshold structure, and recovery phases. This illustrates the regularizing role of the physics-informed loss: instead of fitting high-frequency measurement noise, the network favors trajectories that remain consistent with the two-compartment Hodgkin--Huxley dynamics. The sparse somatic samples also help anchor the hidden voltage under noisy conditions, but because these samples are themselves sparse and noisy, they cannot fully resolve the identifiability limitations of the slower conductances.

Finally, Table~\ref{tab:robustness_clean_8runs} evaluates whether the clean-data results depend on a favorable random initialization. The same clean experiment was repeated over eight independent random seeds, while keeping the dataset, stimulation protocols, architecture, hyperparameters, optimizer, and sparse somatic sampling mask fixed. The somatic reconstruction error remains stable across runs, with $\mathrm{RMSE}(V_s)=1.83\pm0.19$ mV and a min--max range of $1.42$--$2.03$ mV. The best training loss is $12.09\pm1.28$, indicating moderate optimization variability, which is expected in PINN training because the loss combines data fidelity, sparse somatic supervision, and stiff ODE residuals near spikes.

The conductance estimates in Table~\ref{tab:robustness_clean_8runs} confirm the same identifiability hierarchy observed in Tables~\ref{tab:somatic_supervision_ablation} and~\ref{tab:noise_robustness}. The sodium conductance is highly reproducible, with $\bar g_{Na,s}=119.91\pm0.04~\mathrm{mS/cm^2}$, a coefficient of variation of $0.03\%$, and a relative error of $0.08\%$. The delayed-rectifier potassium conductance is also stable, with $\bar g_{DR,s}=11.97\pm0.10~\mathrm{mS/cm^2}$, a coefficient of variation of $0.83\%$, and a relative error of $0.25\%$. In contrast, the M-type conductance has a larger coefficient of variation of $20.24\%$, and the calcium conductance has a coefficient of variation of $9.57\%$. This shows that the main limitation is not a global instability of the PINN optimization. Rather, the larger variability reflects the weaker identifiability of slow and calcium-related conductances under the present stimulation protocols and sparse somatic guidance. The min--max ranges of Table~\ref{tab:robustness_clean_8runs} add one observation: across the eight runs $\bar g_{M,s}$ remains at or below its true value ($0.116$--$0.251~\mathrm{mS/cm^2}$, mean $0.211$), whereas $\bar g_{Ca,s}$ lies mostly above it ($0.093$--$0.129~\mathrm{mS/cm^2}$, mean $0.114$), and the same direction is found in the noise experiment (Table~\ref{tab:noise_robustness}). The residual error on these two conductances is thus a small systematic bias superimposed on a large variance. Such a bias is consistent with a partial compensation between the slow somatic currents and the other terms of the current balance, and not with an isolated optimization failure, which would produce errors of random sign. The reconstruction error itself, with a coefficient of variation of $10\%$ and a range of $1.42$--$2.03$~mV, shows that the somatic trajectory is recovered with the same quality by every run, independently of where the slow conductances settle: the trajectory is well determined even where these two parameters are not.

The results of the eight independent training runs are summarized in Table~\ref{tab:robustness_clean_8runs}.
For a quantity $z$ measured across the eight runs, we report the empirical mean and standard deviation as
\[
\overline z
=
\frac{1}{8}
\sum_{j=1}^{8} z_j,
\qquad
\mathrm{Std}(z)
=
\sqrt{
\frac{1}{7}
\sum_{j=1}^{8}
\left(z_j-\overline z\right)^2
}.
\]
The coefficient of variation is defined as
\[
\mathrm{CV}
=
100
\frac{\mathrm{Std}(z)}{|\overline z|}.
\]

\begin{table}[htbp]
\centering
\caption{Training reproducibility across eight independent random initializations in the clean-data case.
All runs used the same dataset, network architecture, loss weights, optimization settings, and sparse somatic sampling mask; only the random seed was changed.
The table reports full-dataset evaluation metrics for voltage reconstruction, optimization, and conductance estimation.}

\label{tab:robustness_clean_8runs}
\small
\begin{tabular}{lccccc}
\toprule
Quantity & Reference value & Mean $\pm$ Std & Min--Max & CV (\%) & Relative error (\%) \\
\midrule
\multicolumn{6}{l}{\textit{Voltage reconstruction}} \\
RMSE $V_s$  & -- & $1.83 \pm 0.19$ & $1.42$--$2.03$ & $10.30$ & -- \\
\midrule
\multicolumn{6}{l}{\textit{Optimization}} \\
Best training loss & -- & $12.09 \pm 1.28$ & $10.13$--$13.81$ & $10.59$ & -- \\
\midrule
\multicolumn{6}{l}{\textit{Conductance estimation}} \\
$\bar g_{\mathrm{Na},s}$  & $120$ & $119.91 \pm 0.04$ & $119.84$--$119.96$ & $0.03$ & $0.08$ \\
$\bar g_{\mathrm{DR},s}$   & $12$  & $11.97 \pm 0.10$  & $11.79$--$12.09$  & $0.83$ & $0.25$ \\
$\bar g_{\mathrm{M},s}$   & $0.25$   & $0.211 \pm 0.043$ & $0.116$--$0.251$ & $20.24$ & $15.77$ \\
$\bar g_{\mathrm{Ca},s}$  & $0.10$   & $0.114 \pm 0.011$ & $0.093$--$0.129$ & $9.57$ & $13.79$ \\
\bottomrule
\end{tabular}
\end{table}

Taken together, Figures~\ref{fig:pinn-clean}--\ref{fig:noise_sigma2} and Tables~\ref{tab:ablation_sparse_compact}--\ref{tab:robustness_clean_8runs} show that the proposed PINN can reconstruct the hidden somatic output $V_s(t)$ from dense dendritic observations complemented by sparse somatic anchors in the weak-coupling regime. The method accurately recovers the main spike waveform, remains stable under Gaussian voltage noise, and is reproducible across independent random initializations. The dominant spike-generating conductances $\bar g_{Na,s}$ and $\bar g_{DR,s}$ are strongly identifiable, whereas $\bar g_{M,s}$ and $\bar g_{Ca,s}$ remain partially identifiable because they influence slower and less directly observed features of the trajectory. These results motivate two natural extensions: first, designing richer stimulation protocols and additional sparse measurements to better constrain slow conductances; second, validating the approach on real electrophysiological recordings, where dendritic or somatic measurements are sparse and noisy. This experimental validation will be an essential step toward assessing the practical usefulness of the method beyond synthetic data.

\section{Conclusion}

This work addressed the reconstruction of the somatic membrane potential of a weakly coupled two-compartment Hodgkin--Huxley neuron ($g_c=0.1~\mathrm{mS/cm^2}$) from dense dendritic recordings, the known injected current and a small number of somatic samples, using a PINN that also estimates four somatic maximal conductances. Three conclusions can be drawn from the synthetic experiments.

First, the somatic-supervision ablation clarifies the observation requirements of the proposed
approach. Under the present training configuration, the PINN did not recover the somatic
trajectory when trained without somatic observations: the reconstruction stays close to a
smooth subthreshold level, all spikes are missed (RMSE of $10$--$14$~mV) and the conductances
converge to a low-excitability set far from the reference values. Introducing $1\%$ of somatic
samples produced a substantial improvement, in that every spike is then present and the two
spike-generating conductances are already recovered to within $1\%$; the RMSE nevertheless
remains at $6$--$10$~mV because of amplitude and timing errors on individual spikes. Raising
the sampling fraction to $5\%$ yielded close agreement with the reference trajectories, with an
RMSE of about $2$~mV including the spikes and errors below $0.1\%$ on $\bar g_{\mathrm{Na},s}$
and $\bar g_{\mathrm{DR},s}$. The successful reconstructions reported here therefore depend on
sparse somatic anchoring in addition to dense dendritic observations, known stimulation
currents, and the imposed two-compartment equations.

Second, with exactly the same model and the same $5\%$ observations, the PINN is more accurate than the sequential UKF baseline. Its RMSE is lower in every protocol, it shows no systematic offset, whereas the filter settles $4$--$5$~mV below the reference between spikes, and the fast conductances are recovered with errors below $0.1\%$ instead of $35$--$46\%$. Since the two estimators share the equations, the parameters and the data, this difference is attributable to the way the constraint is imposed, globally over the observation window rather than sequentially. The PINN reconstruction also degrades gracefully under Gaussian observation noise up to $2$~mV, with an RMSE increase smaller than the noise level itself, and it is reproducible across eight random initializations ($\mathrm{RMSE}=1.83\pm0.19$~mV).

Third, the residual weakness of the method concerns the two small and slow somatic conductances,
$\bar g_{\mathrm{M},s}$ and $\bar g_{\mathrm{Ca},s}$, which are recovered with errors of
$14$--$18\%$ and vary from seed to seed. Several observations indicate that this limitation is a
property of the inverse problem rather than of the PINN. These currents are two to three orders
of magnitude smaller than the spike-generating ones and act on time scales, tens of milliseconds
for the M-type gate and $569$~ms for one of the calcium gates, comparable to or longer than the
$500$~ms window, so that their signature in a filtered dendritic trace and in a few somatic
samples is faint and can be partly compensated by other terms of the current balance. The UKF,
constrained by the same equations and data, identifies them even less well ($78$--$130\%$
error), and the seed-to-seed variability is confined to these two parameters while
$\bar g_{\mathrm{Na},s}$ and $\bar g_{\mathrm{DR},s}$ are reproducible to within
$0.03$--$0.8\%$. This is the situation of weak practical identifiability discussed in
Section~\ref{sec:res_conductances}: the difficulty lies in what the available observations can
say about these two parameters, and no estimator can remove it without more informative data
\cite{huys2006efficient,daly2015hodgkin,prinz2004similar,raue2009identifiability}. Importantly,
the somatic trajectory itself is recovered with the same quality in every run regardless of
where these two parameters settle, so the main output of the method, the hidden somatic voltage,
is not affected by this limitation.

These conclusions are restricted to synthetic data, four stimulation protocols, and \(g_c=0.1\ \mathrm{mS/cm^2}\). Whether dendrite-only reconstruction can be achieved using alternative formulations, stronger constraints, or more informative stimulation protocols remains an open question; likewise, protocols designed to excite the slow currents, recordings longer than their time constants, or an additional slow observable such as intracellular calcium are the natural means of improving the identification of $\bar g_{M,s}$ and $\bar g_{Ca,s}$.

\section{Perspectives}
\label{sec:perspectives}

\paragraph{Generalization to $N$-compartment models.}
The proposed PINN formulation is not intrinsically tied to a two-compartment topology. 
It can be extended, in principle, to neurons with a larger number of electrically coupled compartments and more complex dendritic morphologies. 
In such models, information propagates through the dendritic tree according to membrane current-balance equations, axial coupling terms, and possibly cable-type dynamics. 
A PINN is a natural candidate in this setting because the physics residuals can be imposed
directly on the governing equations, while data-fidelity terms are applied only at the recorded compartments. 
Thus, sparse measurements from a subset of compartments can constrain the reconstruction of unobserved voltages elsewhere in the morphology, as illustrated in Figure~\ref{fig:generalization_N_compartments}. 
The main challenge is identifiability: as the number of compartments and unknown conductances increases, additional measurements, parameter sharing, or regularization strategies may be required.

\begin{figure}[h!]\centering\includegraphics[width=0.7\linewidth]{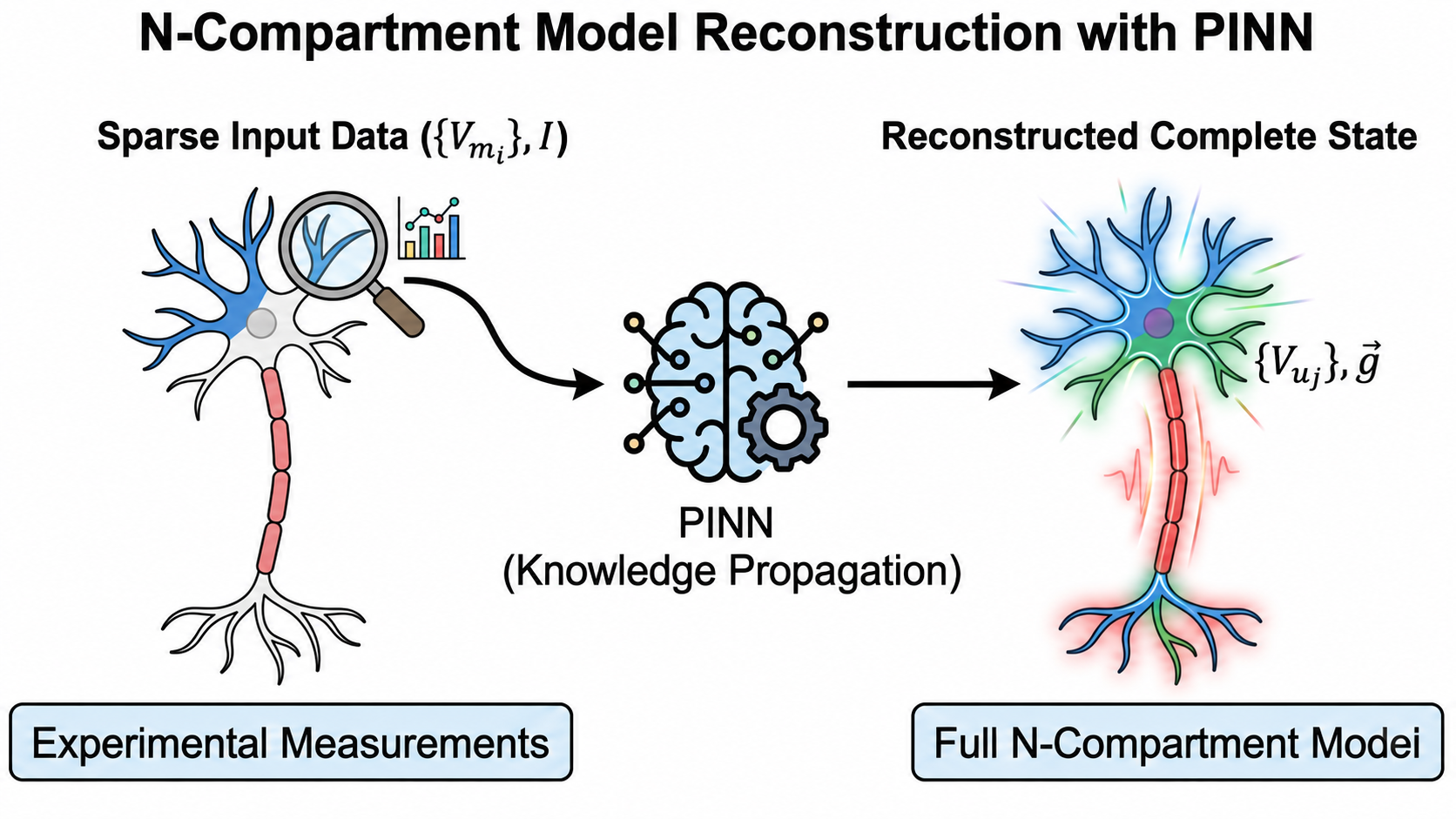}
\caption{\textbf{Generalization to $N$-compartment reconstruction with physics-informed learning.}
Conceptual extension of the proposed PINN framework to a neuron with multiple electrically coupled compartments. 
\textit{Left:} only a small subset of compartmental voltages is experimentally observed, together with the applied stimulation protocol. 
\textit{Center:} the PINN propagates the available information through the biophysical equations, including membrane current balance, gating kinetics, and axial coupling. 
\textit{Right:} the method reconstructs the voltage dynamics of unobserved compartments and estimates selected biophysical parameters, such as maximal conductances and coupling strengths, yielding a physically consistent full multi-compartment model.}

\label{fig:generalization_N_compartments}
\end{figure}
\paragraph{From single-neuron inference to network synchronization.}
A second extension is to move from a single multi-compartment 
neuron to networks of coupled neurons, where synchrony emerges 
from the interplay of intrinsic dynamics, connectivity, and 
external drive. In this context, the PINN can be used not only 
for state and parameter identification, but also for inverse 
design: determining which stimulation frequencies promote or suppress synchronization (Figure~\ref{fig:pinn_critical_frequency}). This 
perspective is particularly relevant to understanding transitions 
to pathological rhythmicity in conditions such as epilepsy or 
Parkinson's disease.

\begin{figure}[h!]
    \centering
    \includegraphics[width=0.82\linewidth]{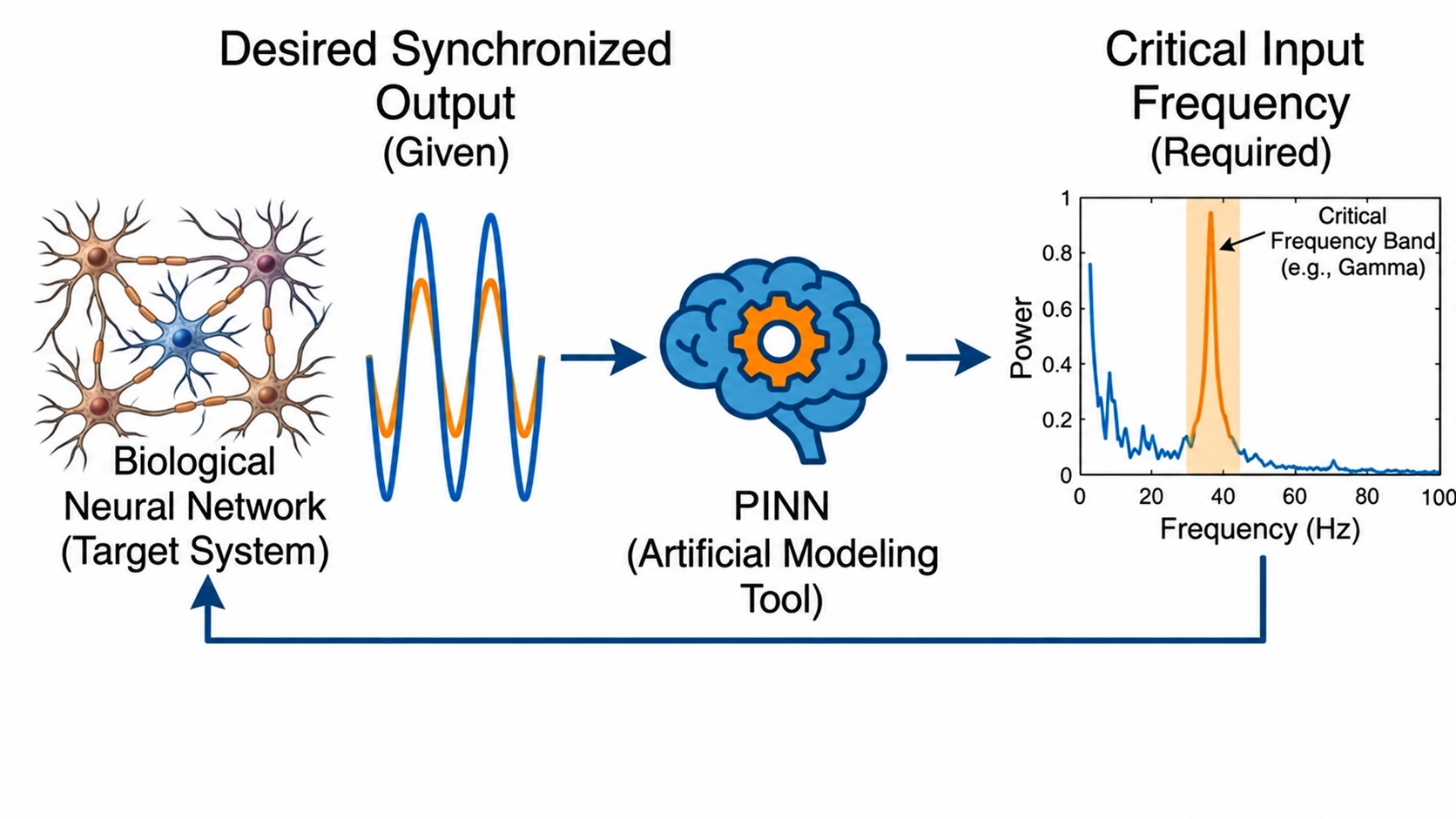}
    \caption{\textbf{PINN-based inference of critical stimulation frequency for synchronization.}
    \emph{Left to center:} a desired synchronized network output (e.g., coherent oscillations) is specified for a biological neural network.
    \emph{Right:} the PINN formulates an inverse problem where the input frequency is optimized under network dynamics constraints; the power spectral density (PSD) highlights a critical frequency band (illustrated here in the gamma range) where oscillatory power and synchrony are maximized.}
    \label{fig:pinn_critical_frequency}
\end{figure}

\newpage
\section*{Availability of data and material}
All data were generated by numerical simulations of the ODE model implemented in MATLAB. The simulation scripts, parameter sets, and the generated output data supporting this study are provided in the Supplementary Materials.

\section*{Competing interests}
The authors declare that they have no competing interests.

\section*{Funding}
This research received no external funding.

\section*{Authors  contributions}
Conceptualization: A.O.,  B.A. and M.A.A.-A.
Methodology: A.O. , B.A. and M.A.A.-A.
Software: A.O.
Validation: A.O, B.A. and M.A.A.-A.
Investigation: A.O.
Writing-original draft: A.O., B.A. and M.A.A.-A.
Writing—review \& editing: A.O., B.A. and M.A.A.-A.
Visualization: A.O.
Supervision: B.A. and M.A.A.-A.
All authors read and approved the final manuscript.

\section*{Acknowledgements}
The authors gratefully acknowledge the PhD Summer School in Physics-Informed Neural Networks and Applications (PINNs 2025), held in Stockholm, Sweden (15-30 June 2025). This work was developed as part of the contributions and follow-up activities proposed after the summer school, and benefited significantly from the lectures and discussions.


\end{document}